\newcommand{\vect}[1]{\boldsymbol{#1}}
\newtheorem{remark}{Remark}
\newtheorem{lemma}{Lemma}
\newtheorem{theorem}{Theorem}
\newtheorem{assmp}{Assumption}
\newtheorem{problem}{Problem}
\newtheorem{definition}{Definition}
\newtheorem{ex}{Example}
\title{\LARGE \bf
Finite-time Heterogeneous Cyclic Pursuit with Application to Target Interception
}
\author{Dwaipayan Mukherjee and Shashi Ranjan Kumar 
\thanks{Dwaipayan Mukherjee ({\tt\small dm@ee.iitb.ac.in}), and Shashi Ranjan Kumar ({\tt\small srk@aero.iitb.ac.in}) are Assistant Professors of Electrical Engineering, and Aerospace Engineering, respectively, at the
        Indian Institute of Technology Bombay, Mumbai- 400076, India.}
}
\def\BState{\State\hskip-\ALG@thistlm}
\begin{document}

\maketitle
\thispagestyle{empty}
\pagestyle{empty}

\begin{abstract}
This paper presents a finite-time  heterogeneous cyclic pursuit scheme that ensures consensus among agents modelled as integrators. It is shown that for the proposed sliding mode control strategy, even when the gains corresponding to each agent are non-identical, consensus results within a finite-time provided all the gains are positive. An algorithm is presented to compute the consensus value and consensus time for a given set of gains and initial states of the agents. The set of values where consensus can occur, by varying the gains, has been derived and a second algorithm aids in determining the gains that enable consensus at any point in the aforementioned set, at a given finite-time. As an application, the finite-time consensus in line-of-sight (LOS) rates, over a cycle digraph, for a group of interceptors is shown to be effective in ensuring co-operative collision-free interception of a target, for both constant-speed and realistic models of the interceptors. Simulations vindicate the theoretical results. 
\end{abstract}

\begin{IEEEkeywords}
Finite-time consensus,~ cyclic pursuit,~cooperative guidance,~ sliding mode control 
\end{IEEEkeywords}

\section{Introduction}\label{intro}


In the literature pertaining to cooperative control, the consensus problem in multi-agent systems has drawn significant attention \cite{beard}. Consensus requires all or some of the states of the cooperating agents, which are dynamical systems themselves, to converge to a common value while the interactions between these agents is represented by a directed or an undirected graph. The undirected graphs, with their symmetric interactions, are more amenable to a detailed analysis, but directed graphs may be better suited to represent the interactions in a practical set-up where the agents are actual vehicles \cite{ren2008consensus}. Consensus laws and their variants have proved to be useful in many applications such as formation control, and target capture \cite{ren2006consensus, mukherjee2016target}.

One of the common directed topologies over which consensus has been studied is the directed cycle, that relates to the well-known cyclic pursuit paradigm \cite{klamkin, bruck}. While some works, such as \cite{mukherjee2016target}, have considered target interception by multiple agents in cyclic pursuit, the agents have been modelled as linear systems, mostly single or double integrators, that achieve consensus in position and velocity. Moreover, most of the consensus laws over directed cycles lead to asymptotic convergence. Ref. \cite{rao2011sliding} dealt with finite-time consensus over balanced graphs, but the consensus value could not be controlled at will since the gains for all the agents were identical and it was shown that under such circumstances the consensus value was the average of the maximum and minimum initial state values within the group. In several works, such as \cite{AS,mukherjee2016target, mukherjee2016generalized}, dealing with asymptotic versions of heterogeneous cyclic pursuit, it was shown that the ability to control the consensus value, through a tuning of edge weights, or gains, is critical to problems such as rendezvous, and target capture. However, a finite-time version of heterogeneous cyclic pursuit had not been employed to address these problems so far. Some preliminary results in this direction are provided in \cite{181}.

Further, some works \cite{mukherjee2018robustness, DZ1} have focussed on obtaining the bounds on possible perturbations to such heterogeneously chosen edge weights that still guarantee consensus. But, it may be deduced from \cite{wieland2008consensus, AS} that the consensus value, under cyclic pursuit, would be altered by any amount of perturbation on any edge. Hence, even if consensus is achieved, there is no way to ensure that the consensus value will remain insensitive to edge weight perturbations. This problem too remains relatively unexplored in the existing literature. 

A recurring feature of results related to the linear consensus problem is that their applications to any practical or realistic problem generally involve a simplification of the dynamics of individual agents. Often, the actual dynamics are either best captured by linear systems of higher order, or nonlinear systems, while the consensus results are applicable to agents modelled by integrators. A particular problem is one of cooperatively intercepting a target using multiple missiles. This is particularly significant for  intercepting many modern ships that have close-in weapon system (CIWS) \cite{tahk06} which can engage interceptors in a one-to-one scenario. By exploiting the vulnerability of CIWS to attack by multiple missiles, and its limited coverage zone, it is possible to saturate a target's engagement capability and also to intercept it along different approach angles using a team of cooperating missiles. This cooperation also enhances the visibility of target due to multiple viewing angles and is also critical to avoiding collision among the intercepting missiles.

Imposing a pre-specified impact direction in one-to-one engagements has been ensured using several strategies. One such way, that also ensures robustness against uncertainties, is the use of sliding mode control (SMC) for guidance design. For instance, \cite{tal} proposed an SMC-based guidance strategy to intercept targets in head-on, tail-chase, and head-pursuit engagements, while \cite{kumar2012sliding,kumar2014} proposed guidance laws that ensure alignment of a missile with a predefined angle, within a finite-time. Though convergence of missile heading to a desired direction of approach was ensured, an explicit evaluation of the exact time of convergence was intractable. Ref. \cite{rao} proposed an SMC based guidance which enhanced the capture zone by using two switching surfaces. However, none of these guidance laws were designed within a cooperative framework. Although \cite{vitaly} presented cooperative guidance laws that enforce certain relative geometry between missiles and target at interception, the laws are based on linearized engagement dynamics. Thus, design of co-operative interception strategies, that consider nonlinear engagement dynamics, present a challenging class of problems. 

The main contributions of the present paper can now be summarized, in the light of the above discussions. First, a finite-time version of heterogeneous cyclic pursuit is presented. The conditions for the stability of the same are analytically obtained. Next, for a given choice of gains that ensures consensus in the aforementioned cyclic pursuit paradigm, an algorithm is proposed to evaluate the value of consensus. Third, given a desired value of consensus, that belongs to the feasible set, and an a-priori selected finite-time of convergence, another algorithm is presented that provides a suitable set of gains for the agents to meet the requirements. Fourth, it is shown that a \emph{small enough} perturbation to most of the gains (other than two) does not alter the value of consensus and the finite convergence time. Fifth, the proposed finite-time cyclic pursuit set-up is applied to obtain consensus in line-of-sight (LOS) rates of multiple missiles which not only ensures successful interception of the target, but also provides a method for collision avoidance among the cooperating missiles. Finally, the proposed consensus scheme is applied to the nonlinear engagement scenario while considering both constant speed interceptors, and also interceptor models that account for variations in aerodynamic parameters leading to realistic engagement scenarios.        


This paper is organized in the following manner. Section \ref{prelims} provides some background on the existing results related to cyclic pursuit and a brief outline of sliding mode control. In Section \ref{problem} formal statements of the main problems, addressed in this paper, are given. This is followed by Section \ref{results}, where a detailed analysis of finite-time consensus over a cycle digraph is presented along with its application to the problem addressed in this paper. In Section \ref{application}, the finite-time consensus strategy over cycle digraph is applied to ensure cooperative interception of a target by multiple missiles. Section \ref{sims} establishes the validity of the presented results through simulations. Finally, Section \ref{conc} concludes the paper by outlining directions for future research.

\section{Mathematical Foundations}\label{prelims}

This section provides some background on the cyclic pursuit paradigm. Subsequently, some preliminary concepts on sliding mode control are summarized. These will aid in laying the foundations for the analysis of finite-time consensus over a directed cycle that is presented later. 

\subsection{Consensus via Cyclic Pursuit}
For a system of $n$ agents in cyclic pursuit, every agent, indexed $i$, receives information about some or all the states of an agent indexed $i+1$ (modulo $n$) and designs its control based on this information. In Fig. \ref{fig:CP} the pursuit graph is shown. Note that the flow of information is in a direction opposite to that indicated by the arrows on the edges of the pursuit graph. If the dynamics of the agents are given by single integrators, the kinematic equation of agent $i$ can be described by
\begin{align}
\dot{x}_i&=u_i,~~u_i=f_i(x_{i+1}-x_i),\label{CP_gen}
\end{align}
where $x_i$ is the state of agent $i$ (modulo $n$), $u_i$ is the control input, and $f_i(.):\mathbb{R}^n\to\mathbb{R}$ is a real valued function. When $f_i(x_{i+1}-x_i)=k_i (x_{i+1}-x_i)$, linear consensus protocol \cite{AS} results.  Under such a paradigm, the point of convergence is $X_f={\sum_{i}\left(\frac{x_i(0)}{k_i}\right)}/{\sum_{i}\frac{1}{k_i}}$,which belongs to the interior of the convex hull of the initial agent states, $\{x_i(0)\}_{i=1,\ldots,n}$, when $k_i>0~\forall~i$. Ref. \cite{AS} further showed that at most one of the gains, $k_i$, can be negative, subject to a lower bound. This expands the set of points where the agents may converge.   
\begin{figure}
\centering\includegraphics[scale=.6]{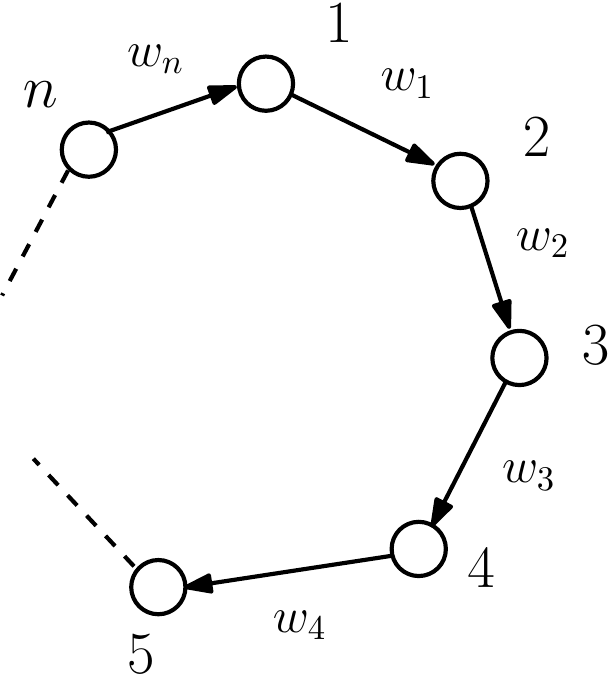}
\caption{Heterogeneous cyclic pursuit with $n$ agents.}
\label{fig:CP}
\end{figure}
\subsection{Sliding Mode Control}

Sliding mode control (SMC), which was developed for control affine dynamic systems, provides a simple design procedure and its implementation leads to such closed loop dynamics as are insensitive to parameter variations and disturbances. It involves the design of functions of the system states, and control inputs that are discontinuous in their arguments. Consider the system
\begin{equation}\label{sysdyn}
    \dot{\vect{x}} = F(\vect{x})+B(\vect{x})\vect{u},\ \vect{x}\in \mathbb{R}^n,\ \vect{u}\in \mathbb{R}^m, \ m\leq n,
\end{equation}
for which the control, $\vect{u}$, has to be designed to ensure $\vect{x}\to\vect{0}$, where $\vect{x}$ denotes the states of the system. The controller design is tantamount to choosing
\begin{enumerate}
	\item switching surfaces, $\vect{\sigma}(\vect{x})\in \mathbb{R}^m$, that are typically linear combinations of the states, and lead to $\vect{x}\to\vect{0}$ if $\vect{\sigma}(\vect{x})=\vect{0}$, 
		\item control $\vect{u}$ that acts discontinuously, component-wise:
		\begin{equation}
			{u}_i=\begin{cases}
				       u_i^+ & \textrm{if } \sigma_i(\vect{x})>0, \\
				       u_i^- & \textrm{if } \sigma_i(\vect{x})<0,
				  \end{cases}
		\end{equation}
\end{enumerate}
where $i\in\{1,\cdots, m\}$. A frequently used discontinuous function used in SMC is $\text{sign}(.):\mathbb{R}\to\{-1,0,+1\}$ (signum function), which equals $-1$, when its argument is negative, $+1$ for a positive argument and $\text{sign}(0)=0$. With an appropriate choice of control, the system states can be made to align towards $\vect{\sigma}(\vect{x})$, reach the switching surface within a finite-time, and subsequently continue to remain on this switching surface. The resultant closed loop dynamics, given by $\vect{\sigma}(\vect{x})=0$, are called the sliding modes, and are of order $m$. The main features of SMC, that distinguish it from other nonlinear control techniques, are that sliding mode is achieved within a finite-time, and that its dynamics are insensitive to system parameter perturbations, and external disturbances
\cite{VadimUtkin2009}. 
Further details about SMC are available in \cite{hungsmc,utkinsurvey}. It may be remarked that for the consensus problem, in particular, it is not straightforward to prove that SMC leads to the desired effect, i.e. consensus \cite{rao2011sliding}.

\section{Problem Formulation}\label{problem}
This paper addresses the problem of ensuring a finite-time consensus over a cycle digraph for agents modelled as single integrators. It is desirable that the designed control law, besides meeting this fundamental objective, ought to have certain other features. These are summarized in this section through formal problem statements. 
\begin{problem}\label{p2} 
For a system of $n$ agents in cyclic pursuit, whose dynamics are given by \eqref{CP_gen}, design a control law, $u_i$, that ensures consensus in the agents' states at any desired finite-time, $t_f\in(0,\infty)$, and at a desired value of consensus, $X_f$, within some \emph{feasible} set. 
\end{problem} 
\begin{problem}\label{p3}
For the designed consensus law that solves Problem \ref{p2}, determine the robustness of the consensus law against variations in control parameters. Also, determine the sensitivity of the point of convergence, $X_f$, and the time of convergence, $t_f$, to such variations.	
\end{problem}

As indicated in Section \ref{intro}, the efficacy of the designed finite-time consensus law, that solves Problem \ref{p2}, is tested under a cooperative multi-missile engagement scenario. Note that the terms `missile' and `interceptor' are used interchangeably throughout this paper. The set-up for this engagement is now described.
Planar engagement scenarios with $n$ interceptors, having equal speed $V_M$, are shown in Fig.~\ref{fig:genenggeo}. The $i^{\text{th}}$ missile's flight path angle and LOS angle are denoted by $\gamma_{M_i}$ and $\theta_{i}$, respectively, while the distance of the target from the $i^{\rm th}$ missile and the lateral acceleration of the $i^{\rm th}$ missile are $r_i$, and $a_{M_i}$, respectively. The target is at the origin of the reference frame.
\begin{figure}
\centering\includegraphics[scale=.6]{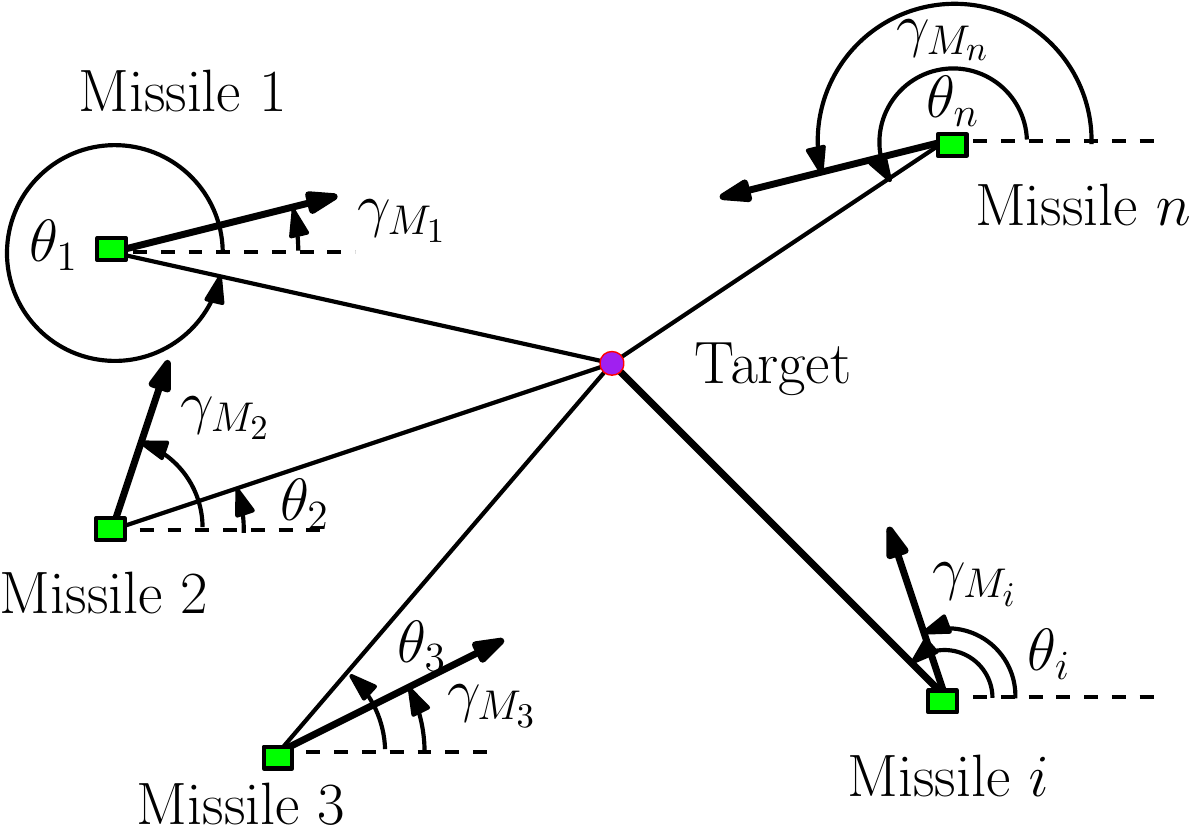}
\caption{Planar multi-missile scenario for a stationary target.}
\label{fig:genenggeo}
\end{figure}

The kinematic equation that governs the engagement between the $i^{\rm th}$ interceptor and the target is given by
\begin{subequations}\label{eq:engdyn}
\begin{align}
\dot{r}_i =&~ -V_M \cos \theta_{M_i}=~V_{r_i}, \label{eq:rdot}\\
r_i\dot{\theta}_i =&~ -V_M \sin\theta_{M_i}=~ V_{\theta_i}, \label{thetadot}\\
\dot\gamma_{M_i} =&~\frac{a_{M_i}}{V_M}\label{eq:gamadot},
\end{align}\end{subequations}
where $\theta_{M_i} = \gamma_{M_i}-\theta_i$, for $i=1,2, \ldots n$, and $V_{r_i},~V_{\theta_i}$ are the components of velocity that are parallel and perpendicular, respectively, to the corresponding LOS. The angle $\theta_{M_i}$ is the heading angle error for the target and $i^{\rm th}$ interceptor pair.

\begin{definition}(\cite{kumar2012sliding})\label{def:impactangle} Impact angle is defined as the angle, with respect to some specific reference, at which the interceptor intercepts a stationary target. In general, impact angle for $i^{\rm th}$ interceptor can be defined as the value of $\gamma_{M_{i}}$ at interception, that is, $ \theta_{\rm {imp}_i}= \gamma_{M_{iF}}$, where the subscript $F$ denotes the final value at the point of interception.
\end{definition}
\begin{remark}\label{thetadotadv}
Impact angle is the same as LOS angle, $\theta_{iF}$, at the point of interception, if $\dot\theta_{iF}=0$ and miss-distance, $r_{iF}\in(0,r_{\rm lethal})$ , where $r_{\rm lethal}$ is the lethal radius for the missile. This is so because $\dot{\theta_i}=0$ implies $\theta_{Mi}=0$, from \eqref{thetadot}, even if $r_i> 0$ and hence, $\gamma_{M_i}-\theta_i=0$.
\end{remark}
Now, some assumptions are made about the multi-interceptor engagement problem. However, these are non-restrictive.
\begin{assmp}\label{assmp1}
Guidance laws are derived under the assumption that $|\theta_{M_i}|\ne {\pi}/{2}$.
\end{assmp}

\begin{remark}
Assumption~\ref{assmp1} is not very restrictive because it has been proved in \cite{kumar2012sliding} that even if $|\theta_{M_i}|= {\pi}/{2}$ at some instant, the system will not remain there and thus such an event may occur at isolated points in time. The significance of this assumption will be clearer later when it will be shown that the control input, $a_{M_i}$, is multiplied by the term $\cos\theta_{M_i}$.
\end{remark}
\begin{assmp}\label{assmp2}
Guidance is designed within a nonlinear framework to circumvent the difficulties and limiting assumptions that arise due to linearization. 
\end{assmp}
\begin{remark}\label{relax}
Assumption \ref{assmp2} admits a broader guidance framework since it implies that the guidance designed under this nonlinear setting will also be effective in a linear framework. Moreover, the linearized dynamics are restrictive in their applicability, especially for large initial heading errors at the onset of the terminal phase.
\end{remark}
\begin{assmp}\label{assmp3}
The $n$ interceptors communicate over a directed cycle graph, as in Fig. \ref{fig:CP}, and their indexing (mod $n$) is so chosen as to ensure that $[\theta_{i+1}(0)-\theta_i(0)] (\text{mod}~ 2\pi)<[\theta_{i+2}(0)-\theta_i(0)](\text{mod}~2\pi)~\forall~i$.
\end{assmp}
\begin{remark}\label{remtheta}
Assumption \ref{remtheta} implies that the agents are indexed in ascending order starting with any one agent, labelled 1, and moving in an anti-clockwise direction with the target at the centre.
\end{remark}
\begin{assmp}\label{assmp4}
There exists at least one pair $\{k,l\}\subset\{1, 2, \ldots, n\}$ such that $\text{sign}(\dot{\theta_k}(0))=-\text{sign}(\dot{\theta_l}(0))$ where $\text{sign}(.)$ is the signum function.
\end{assmp}
\begin{remark}\label{remthetadot}
Assumption \ref{assmp4} essentially implies that the initial LOS rates of all the agents are not of the same sign. This is not too restrictive either because the sign of $\dot{\theta_i}(0)$ can be controlled by the heading angle at the onset of the terminal phase, as may be observed from \eqref{thetadot}.
\end{remark}
The third problem addressed in this paper may now be formally stated.
\begin{problem}\label{p1}
Subject to Assumptions~\ref{assmp1}, \ref{assmp2}, \ref{assmp3}, and \ref{assmp4}, design a cooperative guidance strategy to ensure the interception of a stationary target by multiple interceptors with different impact angles or approach angles, which have a relative spacing among them, while avoiding collision among the interceptors.
\end{problem}

\section{Finite-time Consensus}\label{results}
This section presents a detailed analysis of a finite-time heterogeneous cyclic pursuit scheme that leads to the solutions of Problems \ref{p2} and \ref{p3}.
In this paper, for finite-time consensus via cyclic pursuit, the function $f_i(.)$, as described in \eqref{CP_gen}, is chosen as
\begin{align}\label{HetCP_fin}
f_i(x)=w_i\text{sign}(x),
\end{align}
where $w_i>0$ and $\text{sign}(.)$ represents the signum function. The function $w_i\text{sign(.)}$ can be considered analogous to an edge weight of a cycle digraph in the linear consensus protocol \cite{mukherjee2018robustness}. It will be shown that this choice of control law results in a finite-time version of the asymptotic heterogeneous cyclic pursuit \cite{AS,mukherjee2016synchronous,mukherjee2018robustness} discussed earlier. It may be noted that a similar finite-time consensus law was analyzed for undirected graphs with identical gains (edge weights) or for balanced directed cycle graphs, that is, with $w_i=M>0~\forall~i$ in \cite{rao2011sliding}. In contrast, heterogeneous gains are admitted in this paper for a class of digraphs: directed cycles. 
The system is thus described by:
\begin{align}\label{mainCP}
\dot{x}_i(t)=-w_i\text{sign}(\sigma_i(t)),
\end{align}
where $\sigma_i(t)=x_i(t)-x_{i+1}(t)$.
\begin{lemma}\label{signlemma}
At any instant of time, $t$, unless $\sigma_i(t)=0~\forall ~i$, there exist some $k,l\in\{1,2,\ldots,n\}$ such that $\text{sign}(\sigma_k(t))=-\text{sign}(\sigma_l(t))$.
\end{lemma}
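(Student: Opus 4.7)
The plan is to exploit the fact that the quantities $\sigma_i$ are defined cyclically, so they satisfy a telescoping identity on the cycle, and then argue by contradiction.

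First I would observe that, since indices are taken modulo $n$,
\begin{equation*}
\sum_{i=1}^{n}\sigma_i(t)=\sum_{i=1}^{n}\bigl(x_i(t)-x_{i+1}(t)\bigr)=0
\end{equation*}
for every $t$, because each $x_j(t)$ appears once with a $+$ sign (as $x_i$ when $i=j$) and once with a $-$ sign (as $x_{i+1}$ when $i=j-1\bmod n$). This conservation law is the only structural ingredient I need.

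Next I would argue the contrapositive of the claim. Suppose that at some time $t$ there is no pair $k,l$ with $\operatorname{sign}(\sigma_k(t))=-\operatorname{sign}(\sigma_l(t))$. This means no $\sigma_i(t)$ is strictly positive while some other $\sigma_j(t)$ is strictly negative, i.e.\ the set $\{\sigma_1(t),\ldots,\sigma_n(t)\}$ contains no pair of strictly opposite signs. Consequently either all $\sigma_i(t)\ge 0$ or all $\sigma_i(t)\le 0$. Combined with $\sum_i\sigma_i(t)=0$, this forces $\sigma_i(t)=0$ for every $i$, which is precisely the exceptional case excluded by the hypothesis. Hence under the hypothesis, such $k,l$ must exist.

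Since the whole argument reduces to a telescoping sum plus a sign argument, there is no real obstacle; the only subtlety worth flagging explicitly is that the definition $\operatorname{sign}(0)=0$ is consistent with the conclusion (the lemma asserts strictly opposite signs, so the two indices $k,l$ picked in the contradictory case must be those witnessing a strictly positive and a strictly negative $\sigma$, respectively), and that the telescoping identity relies crucially on the cycle topology assumed for the pursuit graph.
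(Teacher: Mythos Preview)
Your proof is correct and follows essentially the same approach as the paper: the telescoping identity $\sum_i \sigma_i(t)=0$ combined with the observation that if all signs agreed the sum could not vanish unless every term were zero. The paper's own proof is just a terser version of your argument.
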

\begin{proof}Observe that $\sum_{i=1}^{n}\sigma_i(t)=0~\forall t$. Now, if $\text{sign}(\sigma_k(t))=\text{sign}(\sigma_l(t))~\forall~k,l$, then $\sigma_i(t)$ must be zero for all $i$. This completes the proof.
\end{proof}
\begin{lemma}\label{state_evol}
If $\sigma_i(t)$ does not change sign over the interval $(t_0,t_1)$ and $x_i(t)\neq x_{i+1}(t)$ for any $t\in (0,t_0)$, then $x_i(t)$ for $t\in(t_0,t_1)$ is given by:
\begin{align}\label{state}
x_i(t)=
\begin{cases}
x_{i}(t_0)-w_i t~\text{for}~\sigma_i(t_0)>0\\
x_{i}(t_0)+w_i t~\text{for}~\sigma_i(t_0)<0.
\end{cases}
\end{align}
\end{lemma}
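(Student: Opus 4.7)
The plan is to reduce the signum-valued dynamics \eqref{mainCP} to a constant drift on the interval of interest and then integrate directly. Everything hinges on showing that $\text{sign}(\sigma_i(t))$ takes a single constant value throughout $(t_0,t_1)$, after which the ODE becomes a linear function of time.

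First, I would establish continuity of $\sigma_i$ on $[0,\infty)$. Since each $x_j$ satisfies \eqref{mainCP}, it is the integral of a bounded signum function and therefore absolutely continuous; hence $\sigma_i=x_i-x_{i+1}$ is continuous. Next, I would argue that on $(t_0,t_1)$ the value of $\text{sign}(\sigma_i(t))$ is constant and equals $\text{sign}(\sigma_i(t_0))$. This combines three observations: (i) $\sigma_i$ does not change sign on $(t_0,t_1)$ by hypothesis, so in the interior it cannot switch from $+1$ to $-1$; (ii) the auxiliary hypothesis $x_i(t)\neq x_{i+1}(t)$ for $t\in(0,t_0)$ rules out the pathological possibility that $\sigma_i$ is identically zero in a left-neighbourhood of $t_0$, so by left-continuity $\sigma_i(t_0)$ carries a definite nonzero sign; (iii) by continuity, that sign persists on the whole open interval $(t_0,t_1)$.

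With $\text{sign}(\sigma_i(t))$ pinned to $\text{sign}(\sigma_i(t_0))$ on $(t_0,t_1)$, the dynamics \eqref{mainCP} reduce there to the constant-coefficient equation $\dot{x}_i(t)=-w_i$ when $\sigma_i(t_0)>0$, and $\dot{x}_i(t)=+w_i$ when $\sigma_i(t_0)<0$. Direct integration from $t_0$ to a generic $t\in(t_0,t_1)$ then yields the two displayed cases in \eqref{state}, finishing the argument.

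The only mildly delicate step is handling the discontinuous right-hand side of \eqref{mainCP} rigorously: one has to invoke a Carath\'{e}odory (or Filippov) solution concept to conclude that, on an interval where the right-hand side equals a single constant almost everywhere, the solution coincides with the classical integration of that constant. Once this is noted, the rest is bookkeeping with the sign of $\sigma_i(t_0)$, and the lemma follows without further obstruction.
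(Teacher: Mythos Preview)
Your proposal is correct and follows essentially the same approach as the paper: establish that $\text{sign}(\sigma_i(t))$ is constant on $(t_0,t_1)$ and then integrate \eqref{mainCP}. The paper's proof is a single sentence to this effect; your additional care about continuity of $\sigma_i$, the role of the hypothesis $x_i\neq x_{i+1}$ on $(0,t_0)$, and the Carath\'{e}odory/Filippov solution concept merely fills in details the paper leaves implicit.
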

\begin{proof}
Proof follows from integrating both sides of \eqref{mainCP} using the fact that $\text{sign}(\sigma_i(t))$ is a constant ($+1$ or $-1$) over the interval $(t_0,t_1)$.
\end{proof}
\begin{remark}\label{rem1}
It follows from Lemma \ref{state_evol} that at any instant of time, $t$, the evolution of the state $x_i(t)$ is given by a segment of a straight line, the sign of whose slope depends on the sign of $\sigma_i(t)$ and the magnitude of this slope is $w_i$. Without loss of generality, suppose the state $x_i(t)$ evolves along a straight line segment, say $\mathcal{L}_{i}(t)$, with a slope $+w_i$. Evidently, as long as $\mathcal{L}_i(t)$ does not intersect with $\mathcal{L}_{i+1}(t)$ at some $t=t_i$, the slope of $\mathcal{L}_i$ remains unchanged.
\end{remark}

\begin{lemma}\label{intersect2}
For $w_i<w_{i+1}$, intersection of $\mathcal{L}_i(t)$ and $\mathcal{L}_{i+1}(t)$ at some $t=t_i>0$ may occur only if $\text{sign}(\sigma_{i+1}(t_i^-))=-\text{sign}(\sigma_{i}(t_i^-))$ and consequently, $\text{sign}(\sigma_{i}(t_i^+))=\text{sign}(\sigma_{i+1}(t_i^+))=\text{sign}(\sigma_{i+1}(t_i^-))$.
\end{lemma}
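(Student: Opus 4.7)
The plan is to dispose of the lemma by a direct sign analysis of $\dot{\sigma}_i(t)$ in a neighbourhood of the intersection instant, relying on Lemma~\ref{state_evol} to identify the slopes of $\mathcal{L}_i$ and $\mathcal{L}_{i+1}$. First I would note that the intersection $\mathcal{L}_i(t_i)=\mathcal{L}_{i+1}(t_i)$ is equivalent to $\sigma_i(t_i)=0$, and over any interval in which $\text{sign}(\sigma_i)$ and $\text{sign}(\sigma_{i+1})$ are locked, Lemma~\ref{state_evol} gives $\dot{x}_i=-w_i\text{sign}(\sigma_i)$ and $\dot{x}_{i+1}=-w_{i+1}\text{sign}(\sigma_{i+1})$, whence
\begin{equation*}
\dot{\sigma}_i(t)= -w_i\text{sign}(\sigma_i(t))+w_{i+1}\text{sign}(\sigma_{i+1}(t)).
\end{equation*}

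Next I would enumerate the four possible sign combinations at $t_i^-$. If $\text{sign}(\sigma_i(t_i^-))=\text{sign}(\sigma_{i+1}(t_i^-))=+1$, then $\dot{\sigma}_i(t_i^-)=w_{i+1}-w_i>0$, so $\sigma_i$ is strictly increasing and cannot reach zero from above. Symmetrically, if both signs are $-1$, then $\dot{\sigma}_i(t_i^-)=w_i-w_{i+1}<0$ and $\sigma_i$ moves further away from zero from below. Both cases contradict $\sigma_i(t_i)=0$, so an intersection forces $\text{sign}(\sigma_{i+1}(t_i^-))=-\text{sign}(\sigma_i(t_i^-))$, which is the first claim.

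Once opposite signs are established, the same formula yields $|\dot{\sigma}_i(t_i^-)|=w_i+w_{i+1}\neq 0$, so $\sigma_i$ crosses zero transversally at $t_i$ and therefore flips sign: $\text{sign}(\sigma_i(t_i^+))=-\text{sign}(\sigma_i(t_i^-))=\text{sign}(\sigma_{i+1}(t_i^-))$. Finally, since $\sigma_{i+1}=x_{i+1}-x_{i+2}$ is determined by $\mathcal{L}_{i+1}$ and $\mathcal{L}_{i+2}$, and only the pair $(\mathcal{L}_i,\mathcal{L}_{i+1})$ is assumed to intersect at $t_i$, the line $\mathcal{L}_{i+1}$ does not change slope at $t_i$ on the side adjacent to $\mathcal{L}_{i+2}$, so $\sigma_{i+1}$ preserves its sign across $t_i$, giving $\text{sign}(\sigma_{i+1}(t_i^+))=\text{sign}(\sigma_{i+1}(t_i^-))$. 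Chaining the two equalities yields the stated conclusion.

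The only delicate point I anticipate is the tacit genericity assumption that no other pair $(\mathcal{L}_{i+1},\mathcal{L}_{i+2})$ intersects at the very same instant $t_i$; I would make this explicit by noting that the discussion concerns the intersection of $\mathcal{L}_i$ and $\mathcal{L}_{i+1}$, so coincident events involving $\mathcal{L}_{i+2}$ are outside the hypothesis and would be treated separately in the sequel (indeed, Lemma~\ref{signlemma} together with a pairwise-intersection argument suggests such simultaneity is non-generic). Beyond that caveat the proof is essentially a one-line slope computation and a sign flip, so no further machinery should be required.
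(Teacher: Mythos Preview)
Your argument is correct and essentially coincides with the paper's proof: the paper packages the same computation via the Lyapunov function $V_i=\tfrac{1}{2}\sigma_i^2$ and shows $\dot V_i>0$ in the same-sign case and $\dot V_i<0$ in the opposite-sign case, which is equivalent to your direct sign analysis of $\dot\sigma_i$. The paper establishes the sign reversal by an explicit increment computation at $t_i+\delta$ (your ``transversal crossing'' step) and makes the identical caveat that $\mathcal{L}_{i+1}$ is assumed not to intersect $\mathcal{L}_{i+2}$ at $t_i$, so both content and structure match.
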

\begin{proof}
Consider $V_i=\dfrac{1}{2}\sigma_i(t)^2\geq0~\forall t$. Differentiating both sides with respect to time, it follows that $$\dot{V_i}= |\sigma_i(t)|\text{sign}(\sigma_i(t))[w_{i+1}\text{sign}(\sigma_{i+1}(t))-w_{i}\text{sign}(\sigma_{i}(t))].$$ Evidently, $\dot{V}_i(t)>0$ when $\text{sign}(\sigma_{i+1}(t))=\text{sign}(\sigma_{i}(t))$ and $w_i<w_{i+1}$. Thus, $|x_{i}(t)-x_{i+1}(t)|$ is an increasing function of time, thereby precluding an intersection of $\mathcal{L}_i(t)$ and $\mathcal{L}_{i+1}(t)$. However, if $\text{sign}(\sigma_{i+1}(t))=-\text{sign}(\sigma_{i}(t))$, then $\dot{V}_i(t)<0$, and $|x_{i}(t)-x_{i+1}(t)|$ is a decreasing function of time, even if $w_i<w_{i+1}$. This can result in an intersection of $\mathcal{L}_i(t)$ and $\mathcal{L}_{i+1}(t)$ at some $t=t_i$, where $x_i(t_i)=x_{i+1}(t_i)$. But, just after intersection, at $t=t_i^+$, while the slope of the segment $\mathcal{L}_{i+1}$ remains unchanged, unless it also intersects with $\mathcal{L}_{i+2}$ at $t_i$, there is a sign reversal of $\sigma_i(t)$ at $t=t_i$. This can be observed by comparing the signs of  $\sigma_i(t_i^-)$ and $\sigma_i(t_i^+)$. Without loss of generality, suppose $\text{sign}(\sigma_{i+1}(t_i^-))=-\text{sign}(\sigma_{i}(t_i^-))=\text{sign}(\sigma_{i+1}(t_i^+))=1$.
For some infinitesimal $\delta>0$, such that $t_i^+=t_i+\delta$, $x_i(t_i^+)=x_i(t_i)+w_i\delta$ and $x_{i+1}(t_i^+)=x_{i+1}(t_i)-w_{i+1}\delta$ (from Lemma \ref{state_evol}). Since $x_i(t_i)=x_{i+1}(t_i)$, it follows that $\text{sign}(\sigma_{i}(t_i^+))=\text{sign}(\delta (w_i+w_{i+1}))=-\text{sign}(\sigma_i(t_i^-))=1$. Thus, sign reversal occurs at $t_i$. Similar arguments hold if $\text{sign}(\sigma_{i+1}(t_i^-))=-\text{sign}(\sigma_{i}(t_i^-))=\text{sign}(\sigma_{i+1}(t_i^+))=-1$.
\end{proof}
\begin{figure}
\centering\includegraphics[scale=.75]{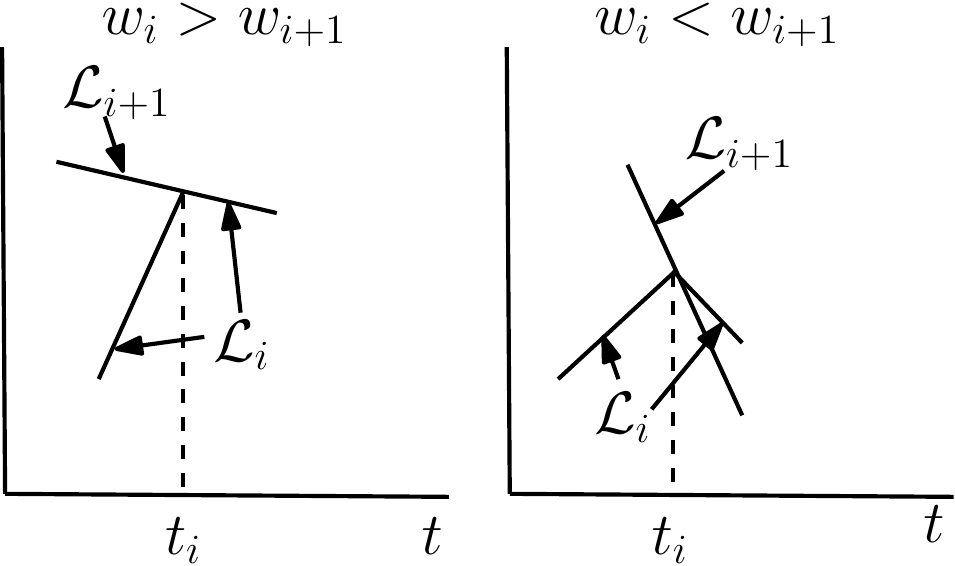}
\caption{Intersection of $\mathcal{L}_i$ and $\mathcal{L}_{i+1}$.}
\label{Intsect}
\end{figure}
\begin{lemma}\label{intersect3}
For $w_i=w_{i+1}$, intersection of $\mathcal{L}_i(t)$ and $\mathcal{L}_{i+1}(t)$ at some $t=t_i>0$ may occur and after intersection, $x_i(t)=x_{i+1}(t)$ for all $t>t_i$ if $\text{sign}(\sigma_{i+1}(t_i^-))=-\text{sign}(\sigma_{i}(t_i^-))$, while for $\text{sign}(\sigma_{i+1}(t_i^-))=\text{sign}(\sigma_{i}(t_i^-))$ $\mathcal{L}_i(t)$ and $\mathcal{L}_{i+1}(t)$ do not intersect but run parallel to each other.
\end{lemma}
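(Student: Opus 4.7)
The plan is to mimic the Lyapunov-based analysis of Lemma~\ref{intersect2}, specialized to the equal-gain case $w_i = w_{i+1} =: w$. Taking $V_i(t) = \tfrac{1}{2}\sigma_i(t)^2$ and differentiating along \eqref{mainCP} between switching instants, the formula already derived in the proof of Lemma~\ref{intersect2} collapses to
\[
\dot{V}_i(t) \;=\; w\,\sigma_i(t)\bigl[\text{sign}(\sigma_{i+1}(t)) - \text{sign}(\sigma_i(t))\bigr],
\]
and both assertions of the lemma are read off from the bracket on the right.

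For $\text{sign}(\sigma_i) = \text{sign}(\sigma_{i+1})$ the bracket vanishes, so $\dot{V}_i \equiv 0$ and $|\sigma_i|$ is constant. Moreover, by Lemma~\ref{state_evol} the slopes of $\mathcal{L}_i$ and $\mathcal{L}_{i+1}$ are $-w\,\text{sign}(\sigma_i)$ and $-w\,\text{sign}(\sigma_{i+1})$ respectively, which are equal, so the two segments run parallel and never intersect. For $\text{sign}(\sigma_i) = -\text{sign}(\sigma_{i+1})$ the bracket equals $-2\,\text{sign}(\sigma_i)$, so $\dot{V}_i = -2w|\sigma_i|$, equivalently $\tfrac{d}{dt}|\sigma_i| = -2w$. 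Hence $|\sigma_i|$ decays at a constant rate and hits zero at a finite instant $t_i$, establishing the existence of an intersection.

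The only step that is not a line-by-line copy of Lemma~\ref{intersect2} — and the main obstacle I expect — is showing that $x_i(t) = x_{i+1}(t)$ persists for all $t>t_i$, because the strict inequality $\dot{V}_i<0$ that drove the sign-reversal argument in Lemma~\ref{intersect2} degenerates when the gains agree. My plan is to examine $\dot{\sigma}_i = w\bigl[\text{sign}(\sigma_{i+1}) - \text{sign}(\sigma_i)\bigr]$ on either side of $\sigma_i = 0$, while using (as in Lemma~\ref{intersect2}) the standing assumption that $\sigma_{i+1}$ retains its sign in a right neighbourhood of $t_i$ so that no second simultaneous crossing intrudes. Taking $\text{sign}(\sigma_{i+1}(t_i^+)) = +1$ without loss of generality, one gets $\dot{\sigma}_i = 2w>0$ when $\sigma_i < 0$ and $\dot{\sigma}_i = 0$ when $\sigma_i > 0$, so $\{\sigma_i = 0\}$ is attractive from below and non-leaving from above. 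A standard Filippov / equivalent-control argument then fixes the equivalent value of $\text{sign}(\sigma_i)$ on the surface as $+1$, which forces $\dot{x}_i = -w = \dot{x}_{i+1}$ and identifies $\sigma_i(t)\equiv 0$, i.e.\ $x_i(t) = x_{i+1}(t)$, as the unique forward Filippov solution for $t>t_i$. The symmetric sub-case $\text{sign}(\sigma_{i+1}(t_i^+)) = -1$ is handled identically, completing the proof.
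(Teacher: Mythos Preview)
Your argument is correct and follows essentially the same route as the paper: the Lyapunov function $V_i=\tfrac12\sigma_i^2$ together with the sign case-split, yielding parallel segments when the signs agree and a finite-time collision when they disagree. The one place where you go beyond the paper is the post-intersection analysis: the paper disposes of the merger in a single sentence (arguing that after a would-be sign reversal $\dot V_i=0$, hence $V_i$ stays at zero), whereas you supply the Filippov/equivalent-control justification that pins down $\sigma_i\equiv 0$ as the unique forward solution. That extra care is warranted here, since on the side $\sigma_i>0$ the vector field is tangential ($\dot\sigma_i=0$) rather than strictly inward, so the surface is only one-sided attractive; your observation that $0\in[0,2w]$ and that any escape to $\{\sigma_i>0\}$ would freeze $\sigma_i$ at its (zero) entry value is exactly what closes this gap.
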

\begin{proof}
For $w_i=w_{i+1}$ it readily follows that so long as $\text{sign}(\sigma_{i+1}(t_i^-))=\text{sign}(\sigma_{i}(t_i^-))$, $\mathcal{L}_i(t)$ and $\mathcal{L}_{i+1}(t)$ are segments of two parallel lines. However, for $\text{sign}(\sigma_{i+1}(t_i^-))=-\text{sign}(\sigma_{i}(t_i^-))$, it follows that $\dot{V}_i(t)=0$ at $t=t_i$, where $\mathcal{L}_i$ may possibly intersect $\mathcal{L}_{i+1}$. Using the same arguments as in the proof of Lemma \ref{intersect2}, it may be observed that there is a sign reversal of $\sigma_i(t)$ at $t_i$ and thereafter for all $t>t_i$, $\dot{V}_i(t)=0$ holds.
\end{proof}
\begin{lemma}\label{intersect1}
For $w_i>w_{i+1}$, $\mathcal{L}_i(t)$ may intersect $\mathcal{L}_{i+1}(t)$ at some $t=t_i>0$, and after intersection $x_i(t)=x_{i+1}(t)$ for all $t>t_i$.
\end{lemma}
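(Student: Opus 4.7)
The plan is to mirror the structure of Lemmas \ref{intersect2} and \ref{intersect3}, combining a Lyapunov argument for the contraction of $|x_i-x_{i+1}|$ with a sliding-mode argument showing that the two trajectories stick together after intersection. I again take $V_i(t)=\tfrac{1}{2}\sigma_i(t)^2$ and, exactly as in the proof of Lemma \ref{intersect2}, differentiate to obtain
\[
\dot V_i = |\sigma_i|\,\text{sign}(\sigma_i)\bigl[w_{i+1}\text{sign}(\sigma_{i+1}) - w_i\text{sign}(\sigma_i)\bigr].
\]
If $\text{sign}(\sigma_{i+1})=\text{sign}(\sigma_i)$ the right-hand side reduces to $-|\sigma_i|(w_i-w_{i+1})$; if they disagree it reduces to $-|\sigma_i|(w_i+w_{i+1})$. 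Under the hypothesis $w_i>w_{i+1}$, both expressions are strictly negative whenever $\sigma_i\neq 0$, so $|\sigma_i|$ is strictly decreasing, and an intersection of $\mathcal{L}_i$ with $\mathcal{L}_{i+1}$ at some finite $t_i>0$ is possible (in fact unavoidable once the descending $V_i$ reaches zero).

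For the second assertion — persistence of $x_i(t)=x_{i+1}(t)$ for all $t>t_i$ — I would rule out both candidate signs of $\sigma_i(t_i^+)$ and then invoke the standard sliding-mode attractivity argument to conclude $\sigma_i\equiv 0$ thereafter. Taking WLOG $\sigma_i(t_i^-)>0$, and keeping $\text{sign}(\sigma_{i+1})$ unchanged in a right neighbourhood of $t_i$ (since $x_{i+1}$ and $x_{i+2}$ are continuous), Lemma \ref{state_evol} gives $\dot x_i-\dot x_{i+1}$ directly from the hypothesised signs. Hypothesising $\sigma_i(t_i^+)>0$ yields $\dot x_i-\dot x_{i+1}\in\{-(w_i-w_{i+1}),\,-(w_i+w_{i+1})\}$, both strictly negative, which would force $\sigma_i$ immediately negative and contradict the hypothesis; hypothesising $\sigma_i(t_i^+)<0$ yields $\dot x_i-\dot x_{i+1}\in\{w_i-w_{i+1},\,w_i+w_{i+1}\}$, both strictly positive, which would push $\sigma_i$ back up to zero and again contradict the hypothesis. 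Neither candidate sign being self-consistent, the trajectory is confined to the surface $\sigma_i=0$, and the equivalent-control condition $\dot\sigma_i=0$ forces $\dot x_i=\dot x_{i+1}$, whence $x_i(t)=x_{i+1}(t)$ for all $t>t_i$.

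The main technical subtlety I anticipate is the bookkeeping of $\text{sign}(\sigma_{i+1})$ across $t_i$: strictly speaking the case-by-case computation above uses the \emph{pre-intersection} value of $\text{sign}(\sigma_{i+1})$ on a right neighbourhood of $t_i$, which is legitimate only as long as $\mathcal{L}_{i+1}$ does not simultaneously intersect $\mathcal{L}_{i+2}$ at the same instant. This degenerate coincidence — already implicit in the proof of Lemma \ref{intersect2} — can be handled separately by noting that the attractivity inequality $w_i>w_{i+1}$ remains in force regardless of any subsequent sign flip of $\sigma_{i+1}$, so the sliding-mode conclusion $\sigma_i=0$ is preserved. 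Aside from this bookkeeping, the argument is a direct SMC reachability/invariance calculation and should close cleanly.
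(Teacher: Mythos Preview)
Your proposal is correct and follows essentially the same Lyapunov approach as the paper: both compute $\dot V_i$ and observe that $w_i>w_{i+1}$ forces $\dot V_i<0$ regardless of $\text{sign}(\sigma_{i+1})$, which gives both the intersection and the persistence $\sigma_i\equiv 0$ thereafter. The paper's treatment of persistence is terser (it simply reads off $\sigma_i=0$ for $t>t_i$ from $\dot V_i<0$ whenever $\sigma_i\neq 0$, rather than your explicit sign-by-sign contradiction), and it handles up front the bookkeeping you flag at the end, namely that if $\mathcal{L}_{i+1}$ has already merged with $\mathcal{L}_{i+2}$ then its effective slope is $w_{i+2}\leq w_{i+1}<w_i$, so the inequality survives.
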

\begin{proof}
Consider $V_i$ as defined in the proof of Lemma \ref{intersect2}. Note that here it is assumed that $\mathcal{L}_{i+1}$ has a slope, $\pm w_{i+1}$ at the time $t_i$. If, however, $\mathcal{L}_{i+1}$ has already intersected and merged with $\mathcal{L}_{i+2}$, then its slope will be $\pm w_{i+2}$. In any case, this merger is only possible when $w_{i+2}\leq w_{i+1}$ (as is evident from Lemmas \ref{intersect2} and \ref{intersect3}), and subsequently, $w_{i+2}<w_{i}$. In that case, replacing $w_{i+1}$ by $w_{i+2}$ will not alter this proof. One may thus write 
\begin{align}\label{lemmaV}\dot{V_i}= -w_i|\sigma_i(t)|\left[1-\dfrac{w_{i+1}}{w_i}\text{sign}(\sigma_{i+1}(t))\text{sign}(\sigma_{i}(t))\right]\end{align}
which, along with $w_i>w_{i+1}$, leads to $\dot{V_i}<0~\forall t$, irrespective of the value of $\text{sign}(\sigma_{i+1}(t))\text{sign}(\sigma_{i}(t))$. Further, $\dot{V_i}<0~\forall t$ ensures that $\mathcal{L}_i(t)$ intersects $\mathcal{L}_{i+1}(t)$ at $t=t_i$. Thus, $\sigma_i(t)=0~\forall t>t_i$. 
\end{proof}
\begin{remark}\label{2stickornot2stick}
As may be observed from Fig. \ref{Intsect}, Lemmas \ref{intersect2} and \ref{intersect1} indicate that if $w_i>w_{i+1}$, then $\mathcal{L}_i(t)$ and $\mathcal{L}_{i+1}(t)$ intersect at $t_i>0$ and thereafter $\mathcal{L}_i(t)$ merges with $\mathcal{L}_{i+1}(t)$. On the other hand, for $w_i<w_{i+1}$, even if $\mathcal{L}_i(t)$ and $\mathcal{L}_{i+1}(t)$ intersect at $t_i>0$, thereafter $\mathcal{L}_i(t)$ does not merge with $\mathcal{L}_{i+1}(t)$. Instead, the slope of $\mathcal{L}_i(t)$ changes sign at $t=t_i$, while retaining its magnitude, $w_i$.
\end{remark}
\begin{theorem}\label{convergence}
The cyclic pursuit system given by \eqref{CP_gen}, and \eqref{HetCP_fin} achieves consensus in its states within a finite-time if $w_i>0~\forall~i$.
\end{theorem}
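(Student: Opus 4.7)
The plan is to show, using the lemmas above, that in finite time some adjacent pair of agents in the cycle merges (their states coincide and remain so), which reduces the system to a cyclic-pursuit instance on $n-1$ agents with strictly positive gains. Iterating this reduction at most $n-1$ times collapses the cycle to a single cluster and yields consensus after a finite total time, so the theorem reduces to producing one merger in finite time, plus an induction on $n$ with trivial base case $n=1$.

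To exhibit such a merging pair I would split into two cases. If the weights are not all equal, pick an index $k$ with $w_k=\max_i w_i$ and $w_k>w_{k+1}$; such a $k$ exists because the set $\{i:w_i=\max_i w_i\}$ is a proper subset of the cycle and therefore has an index at its ``right boundary.'' Lemma \ref{intersect1} then guarantees that $\mathcal{L}_k$ and $\mathcal{L}_{k+1}$ intersect and merge in finite time, with the explicit estimate $t_k\leq |\sigma_k(0)|/(w_k-w_{k+1})$ coming from the bound $\dot V_k\leq -(w_k-w_{k+1})\sqrt{2V_k}$ implicit in \eqref{lemmaV}. If instead all $w_i$ equal a common value $M$, then by Lemma \ref{signlemma} at least one $\sigma_j$ is positive and at least one negative whenever consensus has not been reached; since $\text{sign}(\sigma_i)=\text{sign}(\sigma_{i+1})$ for every $i$ would propagate a common sign around the whole cycle and contradict $\sum_i\sigma_i\equiv 0$ with not all $\sigma_i$ zero, an adjacent pair with opposite-sign $\sigma$'s must exist, and Lemma \ref{intersect3} produces a finite-time merger of that pair.

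After a merger at index $k$ the merged cluster plays the role of a single node whose outgoing gain equals $w_{k+1}$ (the successor's gain), by the sliding-mode behaviour encoded in Lemma \ref{intersect1}; the system on the remaining $n-1$ agents is therefore again a heterogeneous cyclic pursuit of the form \eqref{CP_gen}--\eqref{HetCP_fin} with strictly positive gains, and the argument above may be reapplied. The principal obstacle is not the finite-time bookkeeping but the verification that the merged cluster truly evolves as a single agent with the smaller of the two merging gains (so that the inductive step lands in an instance of the same theorem). Fortunately that claim is already packaged inside the statements of Lemmas \ref{intersect2}--\ref{intersect1}, so the reduction goes through cleanly and the total consensus time is bounded by the sum of at most $n-1$ finite merger bounds.
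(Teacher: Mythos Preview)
Your reduction-by-merger argument is correct and takes a genuinely different route from the paper's proof. The paper works with the global gap $V(t)=X_M(t)-X_m(t)$ and shows that, away from consensus, the maximum state strictly decreases while the minimum strictly increases; since at every instant the slopes of $X_M$ and $X_m$ are two distinct gains, one gets $\dot V\le -(w_p+w_q)$ with $w_p,w_q$ the two smallest gains, and hence the single explicit bound $t_f\le V(0)/(w_p+w_q)$. Your approach instead forces one merger, collapses the cycle to $n-1$ nodes, and iterates---this is much closer in spirit to the convergence analysis of Algorithm~\ref{algo} in Remark~\ref{algconv} than to the theorem's own proof. The paper's route gives a cleaner uniform time bound; yours makes the successive-merger mechanism explicit and shifts all the analytic work onto the local Lemmas~\ref{intersect2}--\ref{intersect1}, which is arguably more modular.

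Two points deserve tightening. In the equal-gain case you invoke Lemma~\ref{intersect3} to ``produce a finite-time merger of that pair,'' but that lemma only describes what happens \emph{after} $\mathcal L_i$ meets $\mathcal L_{i+1}$; it does not assert that they meet. You should add that with $\text{sign}(\sigma_i)=-\text{sign}(\sigma_{i+1})$ and common gain $M$ one has $\dot\sigma_i=-2M\,\text{sign}(\sigma_i)$, so $\sigma_i$ reaches zero by time $|\sigma_i(0)|/(2M)$ unless $\sigma_{i+1}$ crosses zero first---in which case the same reasoning at index $i{+}1$ yields a merger there instead. In the unequal-gain case, your bound $|\sigma_k(0)|/(w_k-w_{k+1})$ silently assumes the effective slope of $\mathcal L_{k+1}$ never exceeds $w_{k+1}$ during the approach; this is true (mergers on the $(k{+}1)$-side can only lower that slope by Lemmas~\ref{intersect3}--\ref{intersect1}, and since $w_k$ is maximal nothing can merge into $k$ from the left and raise its slope either), but it is worth stating explicitly so that the appeal to Lemma~\ref{intersect1} is watertight over the whole interval and not just at $t=0$.
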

\begin{proof}
At an instant $t$, there always exist $X_m(t)=\min_{i}x_i(t)$, and $X_M(t)=\max_{i}x_i(t)$. Consider the sets $S_m(t)=\{i:x_{i}(t)=X_m(t)\}$, and $S_M(t)=\{i:x_{i}(t)=X_M(t)\}$. Unless $X_m(t)=X_M(t)$, in which case consensus has been achieved and $\dot{x}_i(t)=0~\forall~i$, it follows that $S_m\cap S_M=\emptyset$. Consider the function $V(t)=X_M(t)-X_m(t)$. For any agent, $i\in S_M(t)$, it follows that $\sigma_i(t)\geq0$ and thus, from Lemma \ref{state_evol}, $\dot{x}_i\leq 0~\forall i\in S_M(t)$. Similarly, for all $i\in S_m(t)$, one may conclude that $\dot{x}_i\geq 0$. Assume that $X_m(t)\neq X_M(t)$, and $\dot{x}_i= 0~\forall~i\in S_m\cup S_M$. This is not possible since for some $i\in S_m$, there exists $i+1\in \{1,2,\dots,n\}\setminus S_m$, so that $\dot{x}_i>0$. Consequently, for every agent $i\in S_m$, $\dot{x}_i>0$ since otherwise the corresponding $\mathcal{L}_i$ will not stick to $\mathcal{L}_{i+1}$, thereby violating Lemma \ref{intersect1}. Analogously, there exists some $i\in S_M$ such that $i+1\in \{1,2,\dots,n\}\setminus S_M$, so that $\dot{x}_i<0$. This, in turn, leads to $\dot{x}_i<0~\forall~i\in S_M$. Thus, except at some isolated points (sets of measure zero), where two segments $\mathcal{L}_i$ and $\mathcal{L}_{i+1}$ intersect, the function $V(t)$ is differentiable everywhere else (differentiable almost everywhere). At these isolated points, Clark's generalized gradient \cite{shevitz1994lyapunov} may be applied to conclude that $\dot{V}(t)<0$ at these sets of measure zero. Further, wherever $V(t)$ is differentiable, there too $\dot{V}(t)<0$. This implies that the cyclic pursuit system eventually achieves consensus. It remains to be shown that this consensus is achieved in finite-time. Consider the two gains whose values are the lowest, say $w_k$ and $w_q$ without loss of generality. It follows that since the evolutions of the states, $x_i(t)$ are along segments of straight lines whose slopes are equal to the gains, $\{w_i\}_{i=1,2,\cdots,n}$ and the values of these slopes may switch from one gain to another, for an agent, at time instants $t_i$, therefore $X_m(t)$ and $X_M(t)$ must also evolve likewise. Thus, $\dot{V}(t)<-(w_k+w_q)<0~\forall~t$ and $\dot{V}(t)$ is piecewise constant. Consequently, the system attains consensus at or before the time given by $t_c={V(0)}/(w_k+w_q)$.
\end{proof}

\begin{remark}\label{reachable}
Theorem \ref{convergence} and its proof establish the existence of some finite-time, $t_f$, and some consensus value, $X_f$. However, in order to constructively obtain the point of convergence and the time of convergence, Algorithm \ref{algo} is effective. The convergence of the algorithm follows from Theorem \ref{convergence}.
\end{remark}
Some terms need to be defined before Algorithm \ref{algo} is stated. All the indices of the agents are taken modulo $n$. 

\begin{definition}\label{def:set1}If the segments $\mathcal{L}_i(t)$ and $\mathcal{L}_{i+1}(t)$ intersect at time $t_i$ for the first time, then $X_i=x_i(t_i)=x_{i+1}(t_i)$, and $\mathcal{P}_i$ is used to denote the pair $(t_i,X_i)$.
\end{definition}
Further, at any instant of time, $t$, define the set $\mathcal{J}(t)$ as
\begin{align}\label{setJ}\mathcal{J}(t)=\{i:\mathcal{L}_i(t)=\mathcal{L}_{i+1}(t)\},
\end{align} 
and its cardinality is given by $|\mathcal{J}(t)|=r$. Suppose the elements of $\mathcal{J}(t)$ are given by $\mathcal{J}(t)=\{j_1,j_2,\ldots, j_r\}$. Algorithm \ref{algo} may now be  presented.
\begin{algorithm}
\caption{Determination of $t_f$ and $X_f$}\label{algo}
\begin{algorithmic}[1]
\State $i=1, t=0$.
\BState \emph{begin loop}
\State Obtain $ \mathcal{P}_{i}$ as in Definition \ref{def:set1}; $i=i+1$.
\State If $i < n$, goto 3.
\BState \emph{end loop}
\State Obtain $\mathcal{P}_n$ with $\mathcal{L}_n(t)$ and $\mathcal{L}_1(t)$.
\State Construct the set $\mathcal{P}=\{\mathcal{P}_i\}_{i=1,2, \ldots, n}$ .
\State Obtain $ T=\min_{i:t_i-t>0} t_i-t$.
\State Construct the set $\mathcal{J}(t+T)$ as in \eqref{setJ}.
\State $k=1$.
\State If $w_{j_k}\geq w_{j_k+1}$, assign $\mathcal{L}_{j_k}(\tau)=\mathcal{L}_{j_k+1}(\tau), \tau>t+T $;  
else $\mathcal{L}_{j_k}(\tau)=x_{j_k}(t+T)-w_{j_k}\text{sign}(\sigma_{j_k}[(t+T)^-])\text{sign}(\sigma_{j_k}[(t+T)^+])\tau, \tau>t+T $.
\State Update $\mathcal{P}_{j_k-1}$, ($j_k-1 ~\text{mod}~ n$) and subsequently $\mathcal{P}$.
\State $k=k+1$.
\State If $k<|\mathcal{J}(t+T)|+1$, goto 11.
\State $t=t+T$.
\State $X_M=\max_i x_i(t)$, $X_m=\min_i x_i(t)$.
\State If $X_M-X_m>0$, goto 8 
\State $X_f=X_M$, $t_f=t$.
\end{algorithmic}
\end{algorithm}
\begin{remark}\label{algconv}
Note that in step 11 of Algorithm \ref{algo}, a segment $\mathcal{L}_{j_k}$ merges with $\mathcal{L}_{j_k+1}$ if $w_{j_k}\geq w_{j_k+1}$, thereby reducing the number of such segments by one during the next iteration that commences at step 8. However, if $w_{j_k}< w_{j_k+1}$, the number of segments do not reduce. Hence, if every intersection of a segment $\mathcal{L}_{j_k}$ and $\mathcal{L}_{j_k+1}$ is such that $w_{j_k}< w_{j_k+1}$, Algorithm \ref{algo} will fail to converge. But, owing to the cyclic nature of the graph, one cannot have all the gains in such an order that $w_i<w_{i+1}~\forall~i$. Hence, after $n$ iterations at most, at some $t=t_{j_k}$, one of the segments, $\mathcal{L}_{j_k}$, must merge with $\mathcal{L}_{j_k+1}$. Similarly, with this updated $\mathcal{L}_{j_k}$, there can be at most $n-1$ iterations after which at least one of the segments merges with its leading segment. By the principle of mathematical induction, it follows that after at most $\dfrac{n(n-1)}{2}$ iterations, Algorithm \ref{algo} must converge. 
\end{remark}
\begin{remark}\label{lastconv}
Note that the last intersection of some $\mathcal{L}_{j_k}$ and $\mathcal{L}_{j_k+1}$ at $t=t_f$ essentially means that all the agents approach each other in two groups (either merged with $\mathcal{L}_{j_k}$ or with $\mathcal{L}_{j_k+1}$). Suppose $w_{j_k}<w_{j_k+1}$. It would be erroneous to conclude, using Lemma \ref{intersect2}, that there will be a sign reversal of the slope of $\mathcal{L}_{j_k}$ and further divergence. This is because in the proof of Lemma \ref{intersect2} it was considered that $\mathcal{L}_{i+1}(t)$ continues unaltered after the intersection with $\mathcal{L}_i(t)$. However, at $t_f$ all the switching functions are zero, implying $\dot{x}_i(t)=0~\forall~i$.
\end{remark} 
\begin{remark}\label{reach_set}
From the proof of Theorem \ref{convergence}, owing to the strictly monotonic nature of $X_M(t)$ and $X_m(t)$, it follows that the point of convergence, $X_f\in (X_m(0),X_M(0))$. However, explicit closed form expressions for the point of convergence, as in \cite{AS} for asymptotic cyclic pursuit, cannot be obtained for an arbitrary choice of positive gains, $w_i$. Instead, Algorithm \ref{algo} provides a way of computing both $X_f$ and the convergence time, $t_f$, for a given set of positive gains.
\end{remark}

Conversely, suppose it is required to choose a set of gains $\{w_i\}_{i=1,2,\ldots,n}$ to ensure that the convergence occurs at a point $X_f\in (\min_i x_i(0),\max_i x_i(0))$ and at time $t_f$. The choice of gains is non-unique and in this paper one gain-selection method to achieve consensus at the desired point and at a given time is provided in the form of an algorithm (Algorithm \ref{algo2}). Before Algorithm \ref{algo2} is presented, some relevant sets need to be defined. Suppose $X_m(0)=\min_i x_i(0)$ and $X_M(0)=\max_i x_i(0)$, which is consistent with the notation in proof of Theorem \ref{convergence}. It will be assumed that $x_p(0)\neq x_q(0)~\forall p,q\in\{1,2,.\ldots,n\}$, although, with some minor refinements, Algorithm \ref{algo2} works even when this assumption does not hold. Thus the sets $S_M(0)$ and $S_m(0)$, as defined in the proof of Theorem \ref{convergence}, are singletons, say $S_M(0)=\{i\}$ and $S_m(0)=\{j\}$. Now consider a partition of the vertex set, $\mathcal{V}$, such that $\mathcal{V}=\mathcal{V}_{max}\cup\mathcal{V}_{min}$, and $\mathcal{V}_{max}\cap\mathcal{V}_{min}=\emptyset$, where $\mathcal{V}_{max}=\{j+1,j+2,\ldots,i-1,i\}$, and $\mathcal{V}_{min}=\{i+1,i+2,\ldots,j-1,j\}$. This partitioning is illustrated in Fig. \ref{partition}, where the dashed line partitions the vertex set into $\mathcal{V}_{max}$ (the top half) and $\mathcal{V}_{min}$ (the bottom half). Suppose the cardinalities of $\mathcal{V}_{max}$ and $\mathcal{V}_{min}$ are $r$ and $n-r$, respectively. 

\begin{figure}[!ht]
\centering\includegraphics[scale=.5]{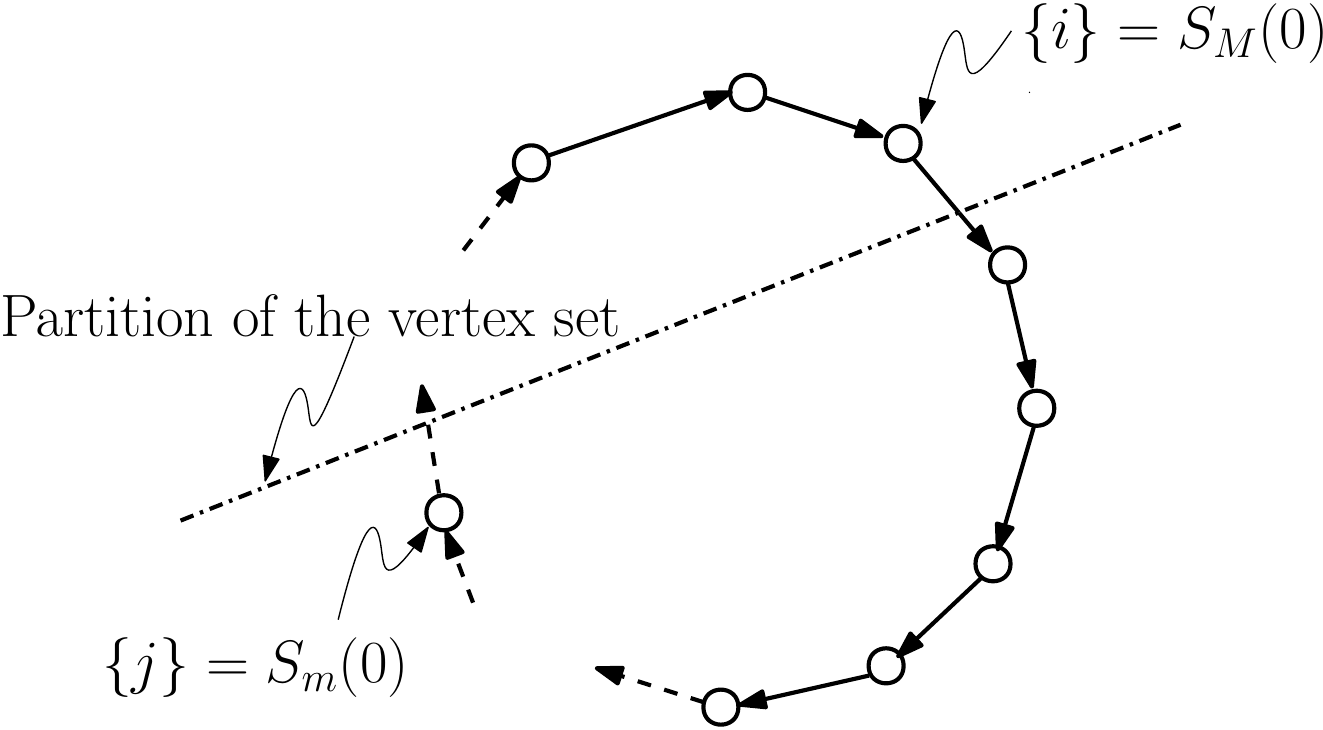}
\caption{A partition of the vertex set used in Algorithm \ref{algo2}.}
\label{partition}
\end{figure}

\begin{algorithm}
\caption{Choice of gains for a given $t_f$ and $X_f$}\label{algo2}
\begin{algorithmic}[1]
\State For $S_M(0)=\{i\}$, and $S_m(0)=\{j\}$, choose $w_i=\dfrac{x_i(0)-X_f}{t_f}$ and $w_j=\dfrac{X_f-x_j(0)}{t_f}$.
\State $k=1$, $l=1$ $t_i=t_j=t_f$.
\BState \emph{begin loop}
\State Choose $w_{j-k}$ such that $w_{j-k}>w_{j-k+1}$ and $\dfrac{|\sigma_{j-k}(0)|}{w_{j-k}-\dfrac{w_{j-k+1}\text{sign}(\sigma_{j-k+1}(0))}{\text{sign}(\sigma_{j-k}(0))}}<t_{j-k+1}$.
\State $t_{j-k}=\dfrac{|\sigma_{j-k}(0)|}{w_{j-k}-\dfrac{w_{j-k+1}\text{sign}(\sigma_{j-k+1}(0))}{\text{sign}(\sigma_{j-k}(0))}}$.
\State $k=k+1$
\State If $k < n-r$, goto 3.
\BState \emph{end loop}
\BState \emph{begin loop}
\State Choose $w_{i-l}$ such that $w_{i-l}>w_{i-l+1}$ and $\dfrac{|\sigma_{i-l}(0)|}{w_{i-l}-\dfrac{w_{i-l+1}\text{sign}(\sigma_{i-l+1}(0))}{\text{sign}(\sigma_{i-l}(0))}}<t_{i-l+1}$.
\State $t_{i-l}=\dfrac{|\sigma_{i-l}(0)|}{w_{i-l}-\dfrac{w_{i-l+1}\text{sign}(\sigma_{i-l+1}(0))}{\text{sign}(\sigma_{i-l}(0))}}$.
\State $l=l+1$
\State If $l < r$, goto 9.
\BState \emph{end loop}
\end{algorithmic}
\end{algorithm}

\begin{lemma}\label{alglemma}
For the finite-time heterogeneous cyclic pursuit system \eqref{CP_gen}, and \eqref{HetCP_fin} whose gains, $\{w_i\}_{i=1,2,\ldots,n}$, are chosen using Algorithm \ref{algo2}, the following hold:
\begin{align}
\label{monotone}
S_M(t_1)\subseteq S_M(t_2),~\text{and}~S_m(t_1)\subseteq S_m(t_2)~\forall~t_1\leq t_2,
\end{align}
where $S_M(t)$ and $S_m(t)$ are defined in the proof of Theorem \ref{convergence}. Also, $\sigma_i(t),~\forall~i, t<t_f$, does not reverse its sign. 
\end{lemma}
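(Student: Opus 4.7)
I propose to prove both parts of the lemma jointly by induction on the cascade of merger events induced by Algorithm~\ref{algo2}. Three facts drive the entire argument. First, by step~1, the segments $\mathcal{L}_i$ and $\mathcal{L}_j$ evolve linearly with slopes $-w_i\,\text{sign}(\sigma_i(0))$ and $-w_j\,\text{sign}(\sigma_j(0))$, and each reaches $X_f$ exactly at $t_f$, provided $\sigma_i$ and $\sigma_j$ retain their initial signs on $[0,t_f)$. Second, by step~4 of the two loops, $w_{j-k}>w_{j-k+1}$ on the $\mathcal{V}_{\min}$ side and $w_{i-l}>w_{i-l+1}$ on the $\mathcal{V}_{\max}$ side, so Lemma~\ref{intersect1} applies to every pairwise intersection within either set. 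Third, the conditions in steps~4--5 give the strict ordering $t_{i+1}<t_{i+2}<\cdots<t_{j-1}<t_f$ on the $\mathcal{V}_{\min}$ side, and symmetrically on the $\mathcal{V}_{\max}$ side.

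Using these, I would argue by induction on the iteration index $k$, decreasing from $n-r-1$ down to $1$, that at time $t_{j-k}$ the segment $\mathcal{L}_{j-k}$ meets $\mathcal{L}_{j-k+1}$ (the latter possibly already carrying the cluster of agents $i+1,\dots,j-k-1$ from previous mergers), and that by Lemma~\ref{intersect1} the two merge and move thereafter with the lesser slope magnitude $w_{j-k+1}$. Consequently $\sigma_{j-k}(t)=0$ for all $t\ge t_{j-k}$ and never reverses sign: it transitions from its initial sign directly to zero. By $t=t_{j-1}$, the entire set $\mathcal{V}_{\min}$ has accumulated on $\mathcal{L}_j$, which then continues to $X_f$ at $t_f$. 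An identical argument on the $\mathcal{V}_{\max}$ side merges that partition onto $\mathcal{L}_i$ by the corresponding time.

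The hardest part, in my view, is handling the two boundary switching functions $\sigma_i$ and $\sigma_j$, whose sign reversals are not precluded by Lemma~\ref{intersect1} alone, since the algorithm does not directly prescribe $w_i$ against $w_{i+1}$ or $w_j$ against $w_{j+1}$. To rule out a premature intersection of $\mathcal{L}_i$ with the $\mathcal{V}_{\min}$ cluster (and symmetrically for $\mathcal{L}_j$), I would combine step~1 with the monotonicity $\dot V<0$ established in Theorem~\ref{convergence}. On $[t_{j-1},t_f]$ one has $\mathcal{L}_i(t)-\mathcal{L}_j(t)=(X_M(0)-X_m(0))-(w_i+w_j)t$, which, since $w_i+w_j=(X_M(0)-X_m(0))/t_f$ by step~1, is strictly positive for $t<t_f$. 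On $[0,t_{j-1}]$, any hypothetical first crossing of $\mathcal{L}_i$ with the leading edge of the $\mathcal{V}_{\min}$ cascade would, through Lemmas~\ref{intersect2} and \ref{intersect1}, force a subsequent evolution inconsistent with both the $\dot V<0$ bound of Theorem~\ref{convergence} and the already established merger schedule on the $\mathcal{V}_{\min}$ side, yielding a contradiction.

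Finally, once no $\sigma_i$ reverses sign, the monotonicity~\eqref{monotone} follows at once: each merger glues its merging agent permanently either onto $\mathcal{L}_i$ (via the $\mathcal{V}_{\max}$ cascade) or onto $\mathcal{L}_j$ (via the $\mathcal{V}_{\min}$ cascade), and starting from the singletons $S_M(0)=\{i\}$, $S_m(0)=\{j\}$, the sets $S_M(t)$ and $S_m(t)$ can only grow with $t$.
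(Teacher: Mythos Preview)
Your approach is essentially the same as the paper's: both exploit the gain orderings $w_{j+1}>\cdots>w_i$ and $w_{i+1}>\cdots>w_j$ produced by Algorithm~\ref{algo2}, invoke Lemma~\ref{intersect1} to conclude that every intersection within each partition is a merger rather than a crossing, and read off from this both the monotone growth of $S_M(t),S_m(t)$ and the non-reversal of each $\sigma_p$. One small slip: at time $t_{j-k}$ it is the follower $\mathcal{L}_{j-k}$, not the leader $\mathcal{L}_{j-k+1}$, that already carries the accumulated cluster $i+1,\dots,j-k-1$ (each earlier merger attaches the follower to its leader, so the growing cluster rides on the current leading index). Your explicit treatment of the boundary switching functions $\sigma_i$ and $\sigma_j$ actually goes beyond the paper's proof, which simply asserts that the only remaining intersection is the final one between the two clusters at $t_f$; your argument there is still somewhat heuristic, but no more so than the original.
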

\begin{proof}
Note that $S_M(0)=\{i\}$ and $S_m(0)=\{j\}$ are non-empty sets. Subsequently, for any $t>0$, Lemmas \ref{intersect2}, \ref{intersect3}, and \ref{intersect1}, combined with the gain selection method in Algorithm \ref{algo2} ensure that $w_i<w_{i-1}<\ldots<w_{j+1}$. This implies that the segments $\mathcal{L}_{i-1},~\mathcal{L}_{i-2},\ldots,\mathcal{L}_{j+1}$ each merge with $\mathcal{L}_{i}$ before $t_f$ at instants given by $t_{i-1}>t_{i-2}>\dots>t_{j+1}$, and $i-k\in S_M(t)~\forall t>t_{i-k}$ where $k=1,2,\ldots, i-j+1 (\text{mod}~n)$. Similar arguments hold for $S_m(t)$. This proves \eqref{monotone}.

Note that since Algorithm \ref{algo2} ensures that $w_i<w_{i-1}<\ldots<w_{j+1}$ and $w_j<w_{j-1}<\ldots<w_{i+1}$, with indices chosen modulo $n$, all intersections between $\mathcal{L}_{p}(t)$ and $\mathcal{L}_{p+1}(t)$, except the last one between $i$ and $j$ correspond to the scenario in Fig. \ref{Intsect} for $w_i>w_{i+1}$ and hence, by Lemma \ref{intersect1}, $\sigma_i(t)~\forall~i$ never reverses its sign.    
\end{proof}
\begin{remark}\label{robustness}\emph{(Insensitivity to gain perturbations)}
Observe that if gains are chosen using Algorithm \ref{algo2}, there always exists $\delta>0$, such that the final point of convergence, $X_f$, is not altered when the gain $w_k\in(\bar{w}_{k}-\delta,\bar{w}_{k}+\delta)$, corresponding to the agent $k$, where $k\notin S_M(0)\cup S_m(0)$, and $\bar{w}_{k}$ corresponds to the nominal value of the gain. 
\end{remark}

%

Using homogeneous gains, consensus will always be achieved at the average of the maximum and minimum of the agents' initial states. But in the statement of Problem \ref{p2}, it is stated that the system is required to achieve consensus at a value within some \emph{feasible} set and so the homogeneous finite-time cyclic pursuit presented in \cite{rao2011sliding} is not applicable. Although using heterogeneous gains, it has been shown through Algorithms \ref{algo} and \ref{algo2} that the set of points where consensus can occur is not a singleton, it is still not possible to reach consensus at any arbitrary point on the state space. Hence, a characterization of this \emph{feasible} set within which the desired point of consensus must lie, needs to be characterized. The following result deals with such \emph{reachable sets}  
\begin{lemma}\label{reachableset}\emph{(Reachable set)} Consider the $d$-dimensional Euclidean space, $\mathbb{R}^d$, over which the dynamics of the agents are described. Suppose the dynamics along the different principle axes are decoupled. The point of consensus belongs to the set $\mathcal{R}=(X_{1m}(0),X_{1M}(0))\times (X_{2m}(0),X_{2M}(0))\times \cdots\times (X_{dm}(0),X_{dM}(0))$, where $X_{rm}(0)=\min_{i}x_{ri}(0)$,and $X_{rM}(0)=\max_{i}x_{ri}(0)$, $r\in\{1,2,\cdots,d\}$. 
\end{lemma}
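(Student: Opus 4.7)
The plan is to leverage the decoupling assumption to reduce the $d$-dimensional claim to $d$ independent one-dimensional claims, each of which is already essentially covered by Theorem \ref{convergence} together with Remark \ref{reach_set}. Concretely, I would write the state of agent $i$ as $\vect{x}_i(t)=(x_{1i}(t), x_{2i}(t),\ldots, x_{di}(t))^\top$, and observe that since the axes are decoupled, the protocol \eqref{mainCP}, when projected onto the $r$-th principle axis, reduces to a finite-time cyclic pursuit system of the form $\dot{x}_{ri}(t)=-w_{ri}\operatorname{sign}(\sigma_{ri}(t))$, where $\sigma_{ri}(t)=x_{ri}(t)-x_{r,i+1}(t)$ and $w_{ri}>0$ is the gain along the $r$-th axis for agent $i$.

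Next I would apply Theorem \ref{convergence} to each of these $d$ systems in turn. This yields that along each axis $r$ the states $\{x_{ri}(t)\}_{i=1}^n$ reach consensus in finite time to some common value $X_{rf}$. The overall consensus point is then $X_f=(X_{1f},\ldots, X_{df})^\top$, and the overall consensus time is the maximum of the $d$ axis-wise convergence times, which is still finite.

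It remains to localize each $X_{rf}$ to the interval $(X_{rm}(0), X_{rM}(0))$. This is exactly the content of Remark \ref{reach_set}, which follows from the strictly monotonic nature of $X_{rM}(t)$ and $X_{rm}(t)$ established in the proof of Theorem \ref{convergence}: $X_{rM}(t)$ is non-increasing and strictly decreases whenever $X_{rM}(t)>X_{rm}(t)$, and symmetrically for $X_{rm}(t)$. Hence $X_{rf}\in(X_{rm}(0), X_{rM}(0))$ for each $r$, and taking the Cartesian product over $r=1,\ldots,d$ gives $X_f\in\mathcal{R}$ as claimed.

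There is no real obstacle here; the mild point to be careful about is that the decoupling hypothesis permits axis-wise application of the one-dimensional theory without coupling terms reintroducing dependence between components. As long as each axis-wise set of gains $\{w_{ri}\}_{i=1}^n$ is strictly positive (inherited from the hypothesis of Theorem \ref{convergence}), the componentwise argument goes through and the product structure of $\mathcal{R}$ is immediate.
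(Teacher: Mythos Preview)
Your proposal is correct and follows essentially the same route as the paper: reduce to $d$ independent one-dimensional cyclic pursuit systems via the decoupling hypothesis, invoke the one-dimensional containment $X_{rf}\in(X_{rm}(0),X_{rM}(0))$ from Theorem~\ref{convergence} and Remark~\ref{reach_set}, and then take the Cartesian product. The paper's own proof is terser but identical in substance, additionally noting that the decoupling permits independent gain design along each axis, which is what justifies the product structure of $\mathcal{R}$.
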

\begin{proof} The proof follows from the observation that the $r^{\rm th}$ coordinate ($r=1,2,\cdots,d$) of the point of convergence belongs to $(X_{rm}(0),X_{rM}(0))$ and that the dynamics along the principle axes are decoupled, implying that the gains along these directions can be independently designed.  
\end{proof}

The sufficiency of $w_i>0~\forall i$ in order to guarantee consensus (according to Theorem \ref{convergence}) indicates that like the asymptotic heterogeneous cyclic pursuit in \cite{AS}, consensus may be achievable even with a negative gain. The set of reachable points may also be expanded in this case by using negative gain for any chosen agent. Theorem \ref{convergence} also implies that even if the gains are perturbed from their original values, as long as these gains are all positive, consensus will not be disrupted. However, it is interesting to investigate whether consensus is achievable when one of the gains has a negative value, and if so, then what the lower limit of such a negative gain is. This question is one of robustness, similar to the study of edge weight perturbations in \cite{mukherjee2018robustness, DZ1}, and the analysis of reachable sets, if consensus is achievable with a negative gain, is beyond the scope of this paper.  The following theorem answers this robustness question.
\begin{theorem}\label{neggain}
For the system described by \eqref{CP_gen}, and \eqref{HetCP_fin}, suppose a gain $w_k<0$ while the remaining gains are all positive. Let $W_{\rm min}=\min_{i,i\neq k} w_i$. The system will achieve consensus iff $w_k>-W_{\rm min}$.
\end{theorem}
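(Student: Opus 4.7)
Proof plan. The proof splits naturally into sufficiency and necessity.

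For sufficiency ($w_k>-W_{\min}$), the hypothesis is equivalent to $|w_k|<W_{\min}\leq w_i$ for every $i\neq k$, so in particular $w_{k-1}>|w_k|$. I would extend Lemma~\ref{intersect1} to the edge $(k-1,k)$: for $V_{k-1}=\sigma_{k-1}^2/2$ one computes
\begin{align*}
\dot V_{k-1} = |\sigma_{k-1}|\bigl[-w_{k-1}+w_k\,\text{sign}(\sigma_k)\text{sign}(\sigma_{k-1})\bigr] \leq -(w_{k-1}-|w_k|)\sqrt{2V_{k-1}},
\end{align*}
where the bound uses $w_k=-|w_k|$ and holds regardless of the signs of $\sigma_k$ and $\sigma_{k-1}$. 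Hence $\sigma_{k-1}$ reaches zero in finite time, and a sliding-mode argument mirroring Lemma~\ref{intersect1} shows that $\mathcal{L}_{k-1}$ and $\mathcal{L}_k$ stay merged thereafter, with common motion $\dot x_{k-1}=\dot x_k=-w_k\,\text{sign}(\sigma_k)$, i.e.\ a single effective agent of gain $w_k$. The resulting $(n-1)$-agent cycle still has $w_k$ as its only negative gain, and its minimum positive gain $\min_{i\neq k-1,k}w_i\geq W_{\min}>|w_k|$ preserves the hypothesis, so I would invoke induction on $n$. The base case $n=2$ reduces to $\dot\sigma_1=-(w_1+w_2)\text{sign}(\sigma_1)$, which vanishes in finite time iff $w_1+w_2>0$, equivalent to $w_k=w_2>-w_1=-W_{\min}$.

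For necessity ($w_k\leq -W_{\min}$), I would exhibit initial conditions that preclude consensus. After a cyclic relabelling take $k=n$ and $x_1(0)<x_2(0)<\cdots<x_n(0)$, so that $\sigma_i(0)<0$ for $i<n$ and $\sigma_n(0)>0$; let $l\neq n$ satisfy $w_l=W_{\min}$. The initial velocities are $\dot x_i=w_i$ for $i<n$ and $\dot x_n=|w_n|\geq W_{\min}$, so by Lemmas~\ref{intersect1}--\ref{intersect3} agents with gains strictly greater than $W_{\min}$ successively merge with their leading neighbours, eventually producing a group containing $n$ whose common motion is dictated by $n$ and has speed $|w_n|$. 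Agent $l$, moving at $w_l=W_{\min}\leq |w_n|$, cannot catch up to this group: if $w_k<-W_{\min}$ the gap strictly diverges, and if $w_k=-W_{\min}$ the gap is preserved (and made strictly positive by suitable initial spacing). In either case consensus fails.

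The main obstacle is the necessity direction, and particularly the boundary case $w_k=-W_{\min}$: one has to carefully track the merging cascade (in the spirit of Algorithm~\ref{algo}) to guarantee that the separation between the merged group containing $k$ and the trailing slow agent $l$ is maintained rather than asymptotically closed, all while the merged group is simultaneously ``fleeing'' from $x_{k+1}$ around the cycle. The sufficient direction, by contrast, reduces cleanly to the positive-gain analysis of Theorem~\ref{convergence} once the modified intersection lemma above is in place.
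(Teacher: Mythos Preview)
Your sufficiency argument is correct and takes a genuinely different route from the paper. The paper argues directly that every agent in $\mathcal{V}\setminus\{k\}$ eventually catches up to $\mathcal{L}_k$ (tracking the whole cascade at once), whereas you isolate the single edge $(k-1,k)$, use the uniform Lyapunov bound $\dot V_{k-1}\le -(w_{k-1}-|w_k|)\sqrt{2V_{k-1}}$ to force a finite-time merger there, and then recurse on the reduced $(n-1)$-cycle. Your approach is tighter: the bound is explicit, the sliding-mode check at $\sigma_{k-1}=0$ is immediate, and the inductive hypothesis is cleanly preserved because the minimum positive gain of the reduced cycle can only increase. The paper's argument, by contrast, is more narrative and relies on informal claims about clusters ``approaching'' $\mathcal{L}_k$.

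For necessity, both you and the paper are informal, but your description of the cascade is inaccurate in one place: it is not true that ``agents with gains strictly greater than $W_{\min}$ successively merge with their leading neighbours.'' Merging of $i$ into $i+1$ depends on $w_i$ versus the current slope of $\mathcal{L}_{i+1}$, not on $w_i$ versus $W_{\min}$; so the group containing $n$ need not absorb everyone above $l$. The robust fact you should use instead is that agent $l$ (with $w_l=W_{\min}$) can \emph{never} merge forward into $l+1$, since $w_l\le w_{l+1}$ rules out sliding on $\sigma_l=0$ by Lemmas~\ref{intersect2}--\ref{intersect3}; hence $x_l$ is forever governed by its own gain and $|\dot x_l|=W_{\min}$. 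Likewise $x_k$ never merges forward into $k+1$ (the two one-sided sliding conditions at $\sigma_k=0$ are incompatible), so $|\dot x_k|=|w_k|\ge W_{\min}$ throughout. With the monotone initial ordering you propose and spacing chosen so that the transient mergers among the remaining agents complete before $\sigma_n$ can vanish, these two speed invariants are enough to keep $x_l(t)<x_n(t)$ for all $t$, handling both the strict and the boundary case $|w_k|=W_{\min}$ uniformly. The paper's necessity argument makes the same speed observation but phrases it globally (``the one with the least slope is $W_{\min}$'') rather than via an explicit initial condition.
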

\begin{proof}
Suppose at least one of the gains, $w_k<0$. Consequently, $\mathcal{L}_i$ can only intersect $\mathcal{L}_{i+1}$ at some $t=t_i$ iff $\text{sign}({\sigma_i(t_i^-))}=-\text{sign}(\sigma_{i+1}(t_i^-))$ and $|w_k|<\text{slope of $\mathcal{L}_{i+1}$ at $t_i$}$. This is the only way $\dot{V}_i(t)<0$ can be ensured. Now, applying similar reasoning as in the proof of Lemma \ref{intersect2}, it may be concluded that at $t_i$, $\sigma_i(t)$ reverses sign and consequently, $\dot{V}_i(t_i^+)>0$. Thus, $V_i(t)$ starts increasing again. Due to the cycle topology, it follows that eventually the segments, $\mathcal{L}_i$ corresponding to all the remaining agents in $\mathcal{V}\setminus \{k\}$ will try to intersect $\mathcal{L}_k$ either individually or in clusters. However, $\mathcal{L}_k$ will still have a slope of $\pm w_k$. Thus, any segment (or multiple merged segments) that can intersect $\mathcal{L}_k$ must have a slope greater than $|w_k|$. After every merger of an $\mathcal{L}_i$ with $\mathcal{L}_{i+1}$ (for $w_i>w_{i+1}$), the slope of the merged line is always $w_{i+1}$, the lower of the two gains. Thus, among all these segments that approach $\mathcal{L}_k$, the one with the least slope is $W_{\rm min}$. Therefore, $w_k>-W_{\rm min}$ must hold for a intersection that may lead to consensus. This proves the necessity.

Suppose the agent $k$ with $w_k>-W_{\rm min}$ is the only one with a negative gain. Consensus may fail to occur only if the segment $\mathcal{L}_i$ corresponding to some agent $i$ fails to merge with $\mathcal{L}_k$. Now, consider all agents other than $k$. These agents will form a chain and whether each of these agents, say $i$, merges with their leader $i+1$ depends on whether $w_i>w_{i+1}$ or not. The leader of this chain is $k-1$ and its tail is $k+1$. Even if these remaining agents do not merge with their corresponding leaders, they will move towards their leader since their gains are all positive. It therefore follows that eventually all these remaining agents will approach agent $k$ with different gains. Since after a successful merger, both agents move with the lower of the two gains, $w_{i+1}$, it follows that some agents will approach agent $k$ with a gain equal to $W_{\rm min}$. Since, $|w_k|<W_{\rm min}$, all these agents will be able to catch up with agent $k$. However, they may or may not merge with $k$ depending on whether agent $k-1$ is part of their cluster. By induction, it follows that subsequently all the agents will merge with $k$. Even agent $k+1$, which may approach $k$ alone or in a cluster will merge with $k$, once $k+2$ has merged with $k$, by following its leader. After the mergers are all complete, $\dot{x}_i(t)=0~\forall~ i$, resulting in consensus. This proves the sufficiency.  
\end{proof}
\begin{ex}
Consider five agents whose initial states are given by $10^3\times$[1~2~ -5~ 4~ 6]. Suppose the nominal gains are $\vect{w}$=[20~40~30~15~ 25]. Consider a perturbation to gain $w_5$. According to Theorem~\ref{neggain}, $W_{\rm min}$ for this case is $15$. The top and bottom portions of Fig.~\ref{figneggain} show the evolutions of switching surfaces when $w_5$ is perturbed to $-14.9$ and $-15.1$, respectively.  This illustrates that $-W_{\rm min} = -15$ is the critical value of gain $w_5$ at which consensus breaks down. 

\begin{figure}[!h]
	\centering\includegraphics[scale=.48]{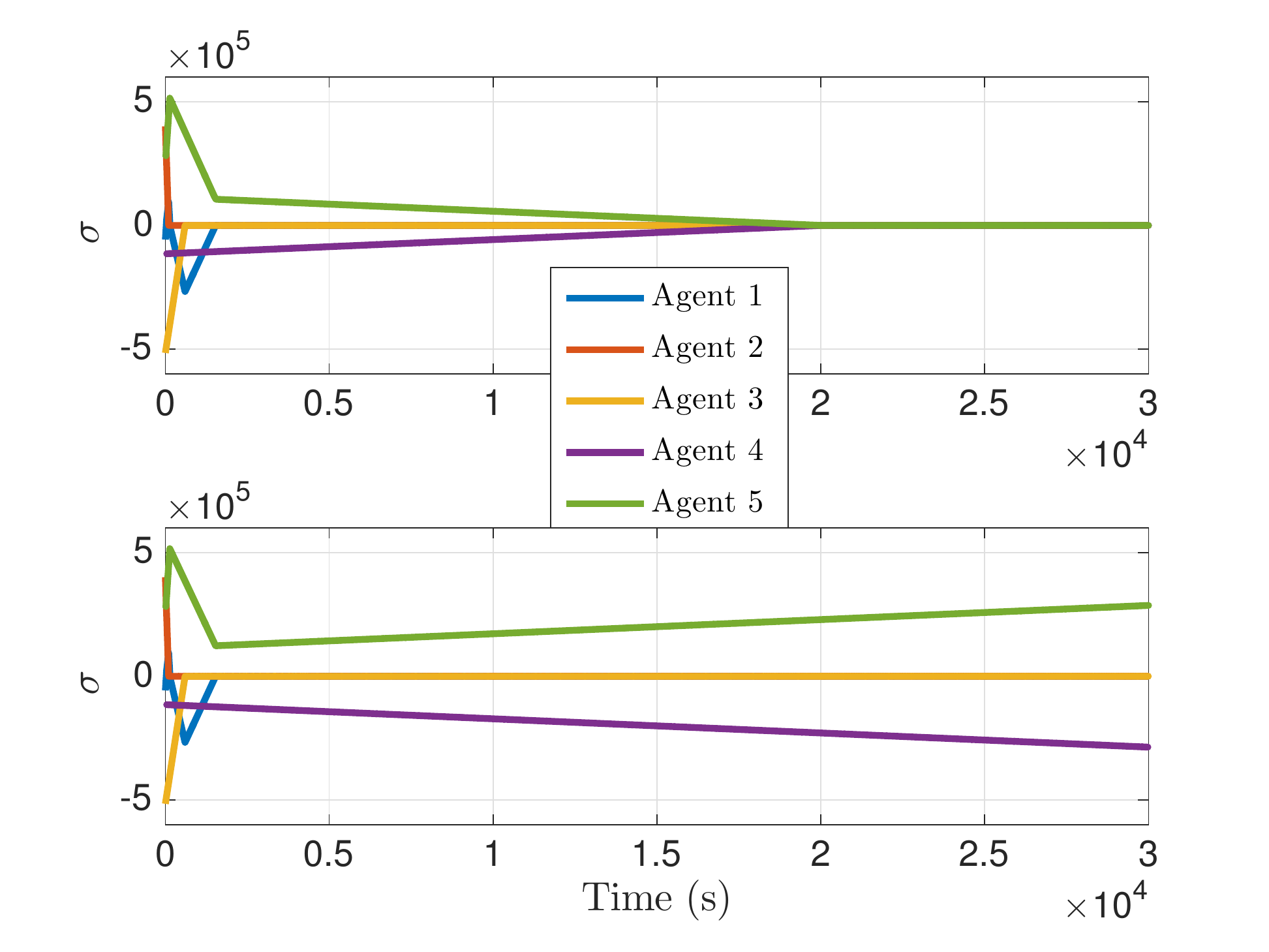}
		\caption{Convergence and divergence of switching surface around critical values of gain, $-W_{\min}$}
	\label{figneggain}
\end{figure}
\end{ex}

\section{Cooperative Guidance Scheme}\label{application}
In this Section, the results on finite-time cyclic pursuit, presented in Section \ref{results}, will first be utilized to solve Problem \ref{p1}. Next, the merits of cooperation among the intercepting missiles will be elucidated.
\subsection{Guidance Strategy}\label{guide} 
The multi-missile engagement described in Section \ref{problem} will be mathematically analyzed here in order to aid in the design of a suitable guidance strategy.
\begin{lemma}\label{lew_LOSr}
The dynamics of the LOS angles for missile-target pairs are of relative degree two with respect to the lateral accelerations of the missiles, and is given by:
\begin{equation}\label{LOSdyn}
\begin{aligned}
\ddot \theta_i  =  - \frac{{2\dot r_i\dot \theta_i }}{r_i } - \left(\frac{{\cos \theta_{M_i } }}{r_i }\right)a_{M_i}
\end{aligned}
\end{equation}
\end{lemma}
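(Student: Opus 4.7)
The plan is to establish relative degree two by explicit differentiation of the LOS-rate equation and to show that the lateral acceleration $a_{M_i}$ first appears at the second derivative.

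First, I would observe from \eqref{thetadot} that $\dot\theta_i = -(V_M/r_i)\sin\theta_{M_i}$, and since neither $\dot r_i$ nor $\sin\theta_{M_i}$ depends explicitly on $a_{M_i}$, the input does not appear in $\dot\theta_i$. Thus the relative degree is at least two. Next I would differentiate the form $r_i\dot\theta_i = -V_M\sin\theta_{M_i}$ with respect to time (with $V_M$ constant, by the constant-speed assumption in the engagement model), which gives
\begin{equation*}
\dot r_i\,\dot\theta_i + r_i\,\ddot\theta_i = -V_M\cos\theta_{M_i}\,\dot\theta_{M_i}.
\end{equation*}

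Then I would substitute $\theta_{M_i}=\gamma_{M_i}-\theta_i$, so that $\dot\theta_{M_i}=\dot\gamma_{M_i}-\dot\theta_i$, and use \eqref{eq:gamadot} to replace $\dot\gamma_{M_i}$ by $a_{M_i}/V_M$. The right-hand side becomes
\begin{equation*}
-V_M\cos\theta_{M_i}\!\left(\frac{a_{M_i}}{V_M}-\dot\theta_i\right)
= -\cos\theta_{M_i}\,a_{M_i}+V_M\cos\theta_{M_i}\,\dot\theta_i.
\end{equation*}
Finally, using \eqref{eq:rdot} in the form $V_M\cos\theta_{M_i}=-\dot r_i$, the second term on the right collapses to $-\dot r_i\,\dot\theta_i$, which combines with the $\dot r_i\,\dot\theta_i$ on the left to produce the desired coefficient $-2\dot r_i\dot\theta_i/r_i$ after dividing through by $r_i$. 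This yields \eqref{LOSdyn} and simultaneously confirms that $a_{M_i}$ enters at the second derivative with coefficient $-\cos\theta_{M_i}/r_i$, establishing that the relative degree is exactly two (provided $\cos\theta_{M_i}\neq 0$, which is ensured by Assumption~\ref{assmp1}, and $r_i\neq 0$, which holds before interception).

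There is no real obstacle here; the derivation is a direct chain-rule computation on the polar engagement kinematics. The only subtlety worth flagging is the role of Assumption~\ref{assmp1}: without $|\theta_{M_i}|\neq \pi/2$ the input channel coefficient $\cos\theta_{M_i}/r_i$ would vanish and the relative degree claim would fail locally, which is precisely why that assumption was introduced in the preceding remark.
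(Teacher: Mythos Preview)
Your argument is correct and follows essentially the same route as the paper: differentiate \eqref{thetadot}, substitute from \eqref{eq:gamadot} and \eqref{eq:rdot}, and rearrange to obtain \eqref{LOSdyn}. You simply spell out the intermediate substitutions more explicitly and add the useful observation that Assumption~\ref{assmp1} is what guarantees the input coefficient $-\cos\theta_{M_i}/r_i$ is nonzero, so the relative degree is exactly two rather than merely at least two.
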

\begin{proof}
On differentiating \eqref{thetadot} with respect to time, and substituting suitably using \eqref{eq:gamadot} and \eqref{eq:rdot}, it follows that
\begin{align}
\dot r_{i}\dot\theta_{i}+r_{i}\ddot\theta_{i} 
=&-a_{M_i}\cos\theta_{M_i}-\dot r_{i}\dot\theta_{i}.\label{eq:rlambdadmdot11}
\end{align}
Clearly, the dynamics of $\theta_{i}$ is of second degree with respect to missile lateral acceleration. The result is true for any $i\in \{1\cdots n\}$ and this completes the proof.
\end{proof}

\begin{remark}
Although the LOS angle $\theta_i$ has a relative degree of two with respect to the control input $a_{M_i}$, the LOS rate $\dot \theta_i$ has a relative degree of one with respect to the same. Note that if the LOS rates for all agents are zero then a successful interception of target is guaranteed provided closing velocity, $V_{c_i}= -V_{r_i}>0~\forall~i$.
\end{remark}
\begin{figure}[!ht]
	\centering\includegraphics[scale=.5]{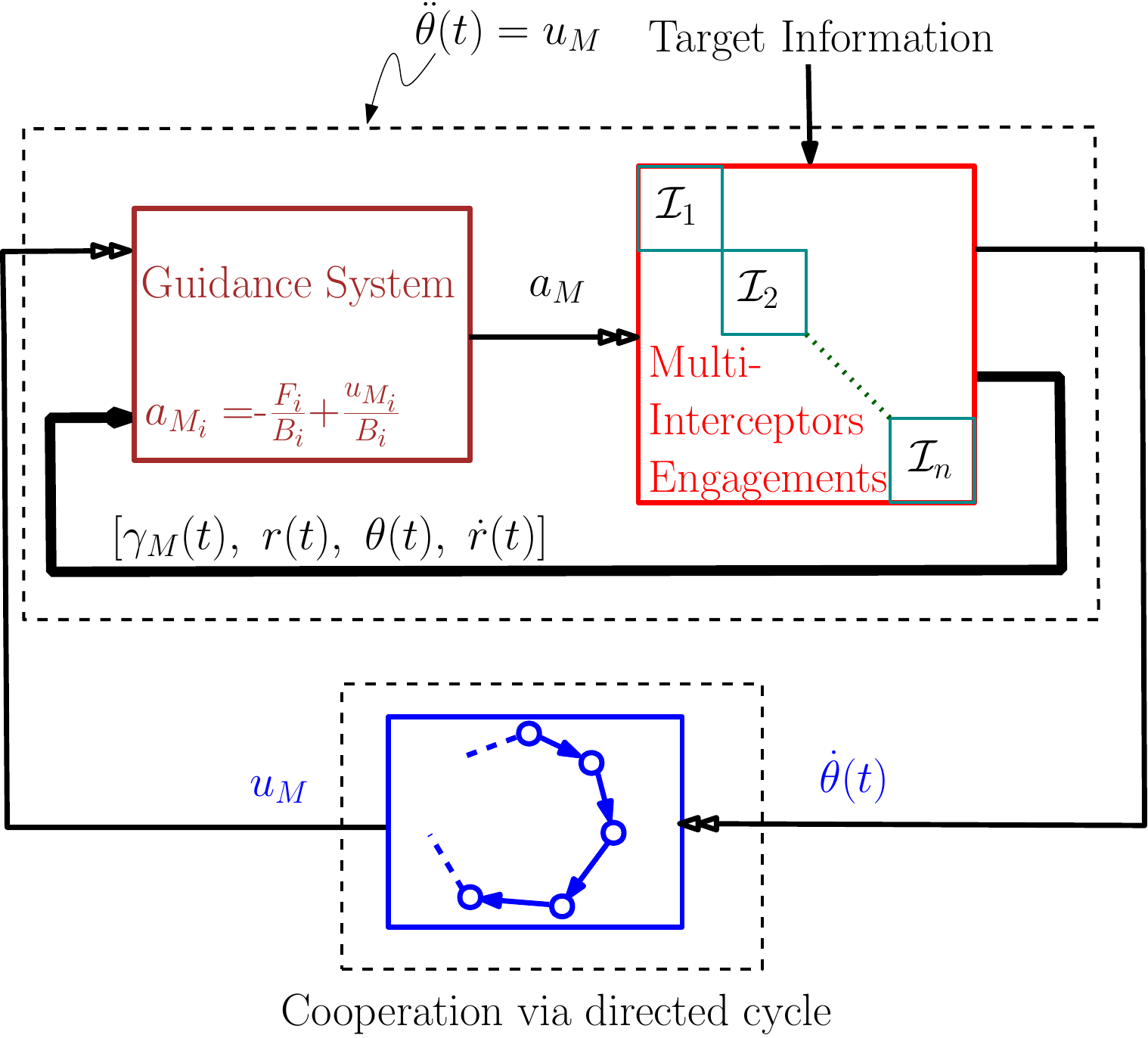}
	\caption{Schematic representation of cooperative guidance system.}
	\label{guidcycle}
\end{figure}
Note that the dynamics of $\theta_i$, for the $i^{\text{th}}$ interceptor, can be represented in a form similar to \eqref{sysdyn} where 
\begin{align}
F_i=&~ - \frac{{2\dot r_i\dot \theta_i }}{r_i},~~B_i= ~-\frac{{\cos \theta_{M_i } }}{r_i }.\label{LOSdynrep}
\end{align}
\begin{lemma}
If the lateral acceleration of $i^{\rm th}$ interceptor is chosen as
\begin{align}
a_{M_i}=&~-\dfrac{F_i-u_{M_i}}{B_i}=~  -\dfrac{{2\dot r_i\dot \theta_i }}{\cos \theta_{M_i } }-\dfrac{ r_iu_{M_i}}{\cos\theta_{M_i}},\label{amtrf}
\end{align}
then the dynamics of corresponding LOS rate, $\dot{\theta}_i$, reduces to that of a single integrator as in \eqref{CP_gen}, with the auxiliary control input $u_{M_i}$.
\end{lemma}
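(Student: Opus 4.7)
The plan is to observe that this lemma is essentially a direct feedback-linearization argument and invoke Lemma \ref{lew_LOSr}. From that lemma together with the shorthand \eqref{LOSdynrep}, the LOS angle dynamics admit the control-affine form
\begin{equation*}
\ddot\theta_i \;=\; F_i + B_i\, a_{M_i},
\end{equation*}
so the strategy is simply to substitute the proposed $a_{M_i} = -(F_i - u_{M_i})/B_i$ into this expression, let the nonlinear drift cancel, and read off $\ddot\theta_i = u_{M_i}$. Since the new state of interest is the LOS rate $\dot\theta_i$, this last identity is exactly $\tfrac{d}{dt}(\dot\theta_i) = u_{M_i}$, which matches the single-integrator template of \eqref{CP_gen} with state $\dot\theta_i$ and auxiliary control $u_{M_i}$. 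I would then verify algebraically that the two expressions given for $a_{M_i}$ in the statement agree, by plugging in $F_i = -2\dot r_i\dot\theta_i/r_i$ and $B_i = -\cos\theta_{M_i}/r_i$ from \eqref{LOSdynrep}; the factors of $r_i$ cancel in the numerator of the first term and the minus signs combine to give the form $-\tfrac{2\dot r_i\dot\theta_i}{\cos\theta_{M_i}} - \tfrac{r_i u_{M_i}}{\cos\theta_{M_i}}$ as claimed.

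The only delicate point, and the one I would flag explicitly, is well-posedness of the inversion $a_{M_i} = -(F_i - u_{M_i})/B_i$, which requires $B_i \neq 0$. Since $B_i = -\cos\theta_{M_i}/r_i$, this is guaranteed by Assumption \ref{assmp1} (ruling out $|\theta_{M_i}| = \pi/2$) together with the pre-interception condition $r_i > 0$; I would note in passing that Remark following Assumption \ref{assmp1} already ensures any crossing of $|\theta_{M_i}| = \pi/2$ happens only on an isolated set of instants, so the control is well-defined almost everywhere along the trajectory. With this caveat recorded, the cancellation argument closes the proof in one line and no further computation is needed.

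I do not expect any serious obstacle: this is a feedback-linearization identity rather than a stability claim, so there is no Lyapunov analysis, no limit argument, and no use of the finite-time consensus machinery developed in Section \ref{results}. The only bookkeeping is (i) the $F_i, B_i$ substitution to reconcile the two forms of $a_{M_i}$, and (ii) the non-singularity check for $B_i$. The significance of the lemma is downstream: it sets up $\dot\theta_i$ as a single-integrator state so that the heterogeneous finite-time cyclic-pursuit protocol \eqref{HetCP_fin} can subsequently be imposed on the LOS rates through the choice of $u_{M_i}$.
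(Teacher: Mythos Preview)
Your proposal is correct and mirrors the paper's proof: both simply substitute \eqref{amtrf} into the control-affine form $\ddot\theta_i = F_i + B_i a_{M_i}$ from \eqref{LOSdyn}--\eqref{LOSdynrep} and read off $\ddot\theta_i = u_{M_i}$. Your added checks (reconciling the two expressions for $a_{M_i}$ and flagging the $B_i\neq 0$ requirement via Assumption~\ref{assmp1}) go slightly beyond what the paper spells out, but the core argument is identical.
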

\begin{proof}
On substituting \eqref{amtrf} in \eqref{LOSdyn}, one obtains
\begin{align}
\nonumber \ddot \theta_i  =&  - \frac{{2\dot r_i\dot \theta_i }}{r_i } +\frac{{\cos \theta_{M_i } }}{r_i }\left[\dfrac{{2\dot r_i\dot \theta_i }}{\cos \theta_{M_i } }+\dfrac{ r_iu_{M_i}}{\cos\theta_{M_i}}\right]=u_{M_i}
\end{align}
This completes the proof.
\end{proof}

\begin{theorem}\label{guidtheo}
Consider the dynamics of LOS rates given by \eqref{LOSdyn}. 
If the control input, $u_{M_i}$, is chosen as
\begin{align}\label{mainCPmiss}
u_{M_i}=-w_i\text{sign}(\sigma_i(t)),
\end{align}
where $\sigma_i(t)=\dot\theta_i(t)-\dot\theta_{i+1}(t)$, so that the guidance command for $i^{\rm th}$ interceptor as
\begin{align}\label{guidmiss}
a_{M_i}=\dfrac{{2\dot V_{c_i}\dot \theta_i }(t)}{\cos \theta_{M_i } }+\dfrac{ r_iw_{i}}{\cos\theta_{M_i}}\text{sign}(\dot\theta_i(t)-\dot\theta_{i+1}(t)),
\end{align}
there exists a choice of $\{w_i\}_{i=1,2,\ldots,n}$ which ensures that the LOS rates, $\dot{\theta}_i~\forall i$, converge to zero within a finite-time, provided Assumption \ref{assmp4} holds.
\end{theorem}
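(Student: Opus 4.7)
The plan is to reduce the problem to an application of the finite-time heterogeneous cyclic pursuit framework developed in Section \ref{results}, and then use Assumption \ref{assmp4} to argue that the target consensus value, namely zero, lies in the reachable set so that Algorithm \ref{algo2} yields an admissible choice of gains. First I would substitute the guidance law \eqref{guidmiss} into the LOS rate dynamics \eqref{LOSdyn}; by the preceding lemma, this feedback linearization collapses the relative-degree-one dynamics of $\dot\theta_i$ with respect to $a_{M_i}$ into the single-integrator form
\begin{equation}\label{reducedLOS}
\ddot\theta_i = u_{M_i} = -w_i\,\text{sign}(\dot\theta_i - \dot\theta_{i+1}),
\end{equation}
which is structurally identical to \eqref{mainCP} with the LOS rates $\dot\theta_i$ playing the role of the states $x_i$ and $\sigma_i = \dot\theta_i - \dot\theta_{i+1}$ as the switching functions.

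Next I would invoke Theorem \ref{convergence} directly on \eqref{reducedLOS}: any choice of positive gains $\{w_i\}_{i=1,\ldots,n}$ ensures that the $\dot\theta_i$'s reach consensus at some value $X_f$ within finite time $t_f$, where by the proof of Theorem \ref{convergence} and Lemma \ref{reachableset} we have
\begin{equation}
X_f \in \bigl(\min_{i}\dot\theta_i(0),\,\max_{i}\dot\theta_i(0)\bigr).
\end{equation}
The crux of the theorem is that this common value can be made to be \emph{zero}, not merely some value in the convex hull. Here Assumption \ref{assmp4} enters: because at least two initial LOS rates have opposite signs, $\min_i\dot\theta_i(0)<0<\max_i\dot\theta_i(0)$, so $0$ is an interior point of the reachable set identified in Lemma \ref{reachableset}. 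Applying Algorithm \ref{algo2} with the desired consensus value $X_f=0$ and any prescribed terminal time $t_f>0$ then produces a feasible positive-gain assignment $\{w_i\}$ for which the reduced system \eqref{reducedLOS} enforces $\dot\theta_i(t_f) = 0$ for all $i$.

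Finally I would translate the conclusion back to the original coordinates: since \eqref{guidmiss} realizes \eqref{reducedLOS} exactly whenever $\cos\theta_{M_i}\ne 0$ (guaranteed by Assumption \ref{assmp1}), finite-time consensus of the LOS rates at zero is achieved by the proposed guidance law. The main obstacle I expect is subtle rather than computational: one must verify that the reachable-set argument genuinely applies, i.e., that for \emph{every} positive choice of gains produced by Algorithm \ref{algo2} the corresponding trajectories of \eqref{reducedLOS} stay within the regime where the feedback cancellation in \eqref{amtrf} is well-defined (so that $|\theta_{M_i}|\ne \pi/2$ throughout $[0,t_f]$). This is handled by Assumption \ref{assmp1} together with Remark \ref{remtheta} following it, which guarantees that $\cos\theta_{M_i}=0$ can occur at most at isolated instants and does not obstruct the finite-time consensus argument.
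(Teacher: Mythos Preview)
Your proposal is correct and follows essentially the same route as the paper's own proof: substitute \eqref{guidmiss} into \eqref{LOSdyn} to obtain the single-integrator cyclic-pursuit dynamics, invoke Theorem \ref{convergence} for finite-time consensus, and then use Assumption \ref{assmp4} to conclude that $0\in(\min_i\dot\theta_i(0),\max_i\dot\theta_i(0))$ so that Algorithm \ref{algo2} furnishes the required gains. One minor slip: the remark you cite as ``Remark \ref{remtheta}'' actually concerns Assumption \ref{assmp3} (the indexing convention), not Assumption \ref{assmp1}; the relevant observation about $|\theta_{M_i}|=\pi/2$ occurring only at isolated instants is the unnumbered remark immediately following Assumption \ref{assmp1}.
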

\begin{proof}
By substituting the value of $a_{M_i}$ and $u_{M_i}$ in \eqref{LOSdyn}, one obtains dynamics similar to that of \eqref{CP_gen}. The proof of convergence of LOS rates follows from Theorem \ref{convergence}. Further, Algorithm \ref{algo2} ensures a suitable choice of $\{w_i\}_{i=1,2,\ldots,n}$ since Assumption \ref{assmp4} guarantees that $0\in (\dot{\theta}_{min}(0),\dot{\theta}_{max}(0))$.
\end{proof}
Note that in the statement of Theorem \ref{guidtheo}, Assumption \ref{assmp4} may be removed by the use of negative gains to achieve consensus at zero LOS rate even when $0\notin (\dot{\theta}_{min}(0),\dot{\theta}_{max}(0))$. But since this paper does not discuss any method to design non-positive gains to achieve consensus at a point outside $(\dot{\theta}_{min}(0),\dot{\theta}_{max}(0))$, Assumption \ref{assmp4} plays a crucial role.

A schematic representation of the overall guidance scheme for the multi-missile system is shown in Fig.~\ref{guidcycle}.

\begin{remark}
The guidance command for the $i^{\rm th}$ missile, \eqref{guidmiss}, consists of two components. One is responsible for the interception of the target while the other ensures consensus in LOS rates at a fixed finite-time. Also, the first term is reminiscent of proportional navigation (PN) class of guidance strategies. Note that the second term depends on the relative LOS rates between the leader ($i+1^{\text{th}}$ missile), and the follower ($i^{\text{th}}$ missile), in a cycle digraph, and not on their absolute values.
\end{remark}

\subsection{Merits of Cooperation}
Among several benefits of cooperation, avoidance of collision among the missiles is most relevant to Problem \ref{p1}.
\begin{lemma}\label{colavoid}
	\emph{(Collision Avoidance)}
	For gains designed using Algorithm \ref{algo2}, the guidance command \eqref{guidmiss} guarantees inter-agent collision avoidance if $\left|\int_{0}^{t_f}\sigma_i(\tau) d\tau\right|<|\theta_{i+1}(0)-\theta_i(0)|~\forall~i$.
\end{lemma}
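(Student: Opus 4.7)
The plan is to translate the collision-avoidance requirement into a statement about the inter-missile angular separation and then exploit the sign-preservation property of the switching variables established in Lemma \ref{alglemma}. Since the target sits at the origin and each missile closes in along its own LOS ray, two consecutive interceptors occupy distinct points of the plane at any time $t$ whenever $\theta_{i+1}(t)\neq\theta_i(t)$. Integrating $\dot\theta_{i+1}-\dot\theta_i=-\sigma_i$ over $[0,t]$ yields $\theta_{i+1}(t)-\theta_i(t)=\bigl[\theta_{i+1}(0)-\theta_i(0)\bigr]-\int_0^t\sigma_i(\tau)\,d\tau$, so collision avoidance between missiles $i$ and $i+1$ reduces to showing that this quantity never vanishes on $[0,t_f]$.

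Next I would invoke Lemma \ref{alglemma}: under the gain selection of Algorithm \ref{algo2}, $\sigma_i(t)$ does not reverse sign on $[0,t_f]$. Hence the map $t\mapsto \int_0^t\sigma_i(\tau)\,d\tau$ is monotone, and $\bigl|\int_0^t\sigma_i(\tau)\,d\tau\bigr|$ is non-decreasing and therefore dominated throughout by its terminal value $\bigl|\int_0^{t_f}\sigma_i(\tau)\,d\tau\bigr|$. Combining this with the hypothesis $\bigl|\int_0^{t_f}\sigma_i(\tau)\,d\tau\bigr|<|\theta_{i+1}(0)-\theta_i(0)|$ shows that $\theta_{i+1}(t)-\theta_i(t)$ cannot cross zero on $[0,t_f]$: the ordering guaranteed by Assumption \ref{assmp3} is preserved and the angular gap stays strictly bounded away from zero.

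Finally, I would convert persistent angular separation back into a spatial statement: polar coordinates are bijective on the punctured plane, so strictly distinct LOS angles at every $t\in[0,t_f]$ translate into strictly distinct positions for interceptors $i$ and $i+1$ throughout the terminal phase. Beyond $t_f$, each LOS angle is frozen because $\dot\theta_i\equiv 0$ after consensus, so the angular gaps remain constant and no future crossing can occur before interception.

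The main obstacle is conceptual rather than computational: the hypothesis only constrains $\int_0^{t_f}\sigma_i\,d\tau$ at the terminal time, whereas collision avoidance is a pointwise requirement on the whole interval $[0,t_f]$. It is precisely the sign-non-reversal property ensured by the ordered gain selection of Algorithm \ref{algo2} that upgrades this terminal bound to a uniform bound; without that structural property, oscillations in $\sigma_i$ could drive the angular gap through zero transiently even while the stated hypothesis holds, so invoking Lemma \ref{alglemma} is the pivotal step.
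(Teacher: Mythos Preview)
Your proposal is correct and follows essentially the same route as the paper's proof: reduce collision avoidance to the preservation of distinct LOS angles, integrate $\sigma_i$ to express the angular gap, invoke Lemma~\ref{alglemma} to turn the terminal bound on $\bigl|\int_0^{t_f}\sigma_i\bigr|$ into a uniform bound on $[0,t_f]$, and handle $t>t_f$ via the frozen LOS angles. Your write-up is in fact slightly more explicit than the paper's, which leaves the integration identity and the polar-coordinate justification implicit.
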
 
\begin{proof}
A sufficient condition for collision avoidance is that no two missiles ever have the same LOS angle during the engagement. This, due to Assumption \ref{assmp3}, implies that $$\left|\int_{0}^{t}\sigma_i(\tau) d\tau\right|<|\theta_{i+1}(0)-\theta_i(0)|~\forall~i~\text{and}~\forall~t\leq t_f.$$ For $t>t_f$, all the missiles have the same LOS rate, zero in the present case, and so collision is ruled out. Now, by Lemma \ref{alglemma}, Algorithm \ref{algo2} implies that $\left|\int_{0}^{t}\sigma_i(\tau) d\tau\right|~\forall~i$ is a non-decreasing function of $t\in(0,t_f)$, since the sign of $\sigma_i(\tau)$ does not reverse in this interval. Thus, one has $\left|\int_{0}^{t}\sigma_i(\tau) d\tau\right|\leq \left|\int_{0}^{t_f}\sigma_i(\tau) d\tau\right|~ \forall~ t\leq t_f~\text{and}~\forall~i$. Hence, $\left|\int_{0}^{t_f}\sigma_i(\tau) d\tau\right|<|\theta_{i+1}(0)-\theta_i(0)|~\forall ~i$ guarantees collision avoidance.
\end{proof}
\begin{remark}\label{rem:colavoid}
The consensus time, $t_f$, can be made arbitrarily small by scaling up all the gains by the some factor, while the consensus value is unchanged. By choosing suitably high values of gains, it can always be ensured that $\left|\int_{0}^{t}\sigma_i(\tau) d\tau\right|<|\theta_{i+1}(0)-\theta_i(0)|$ for all $t\leq t_f$, even if Algorithm \ref{algo2} is not used. But this comes at a price, since higher values of gains will lead to higher lateral acceleration demands. 
\end{remark}

Once consensus in LOS rates is achieved, the lateral acceleration requirement for each missile is zero. Hence, it may be tempting to reduce the consensus time, $t_f$, arbitrarily. However, this will lead to requirement of higher lateral acceleration for $t<t_f$, and may even result in its saturation. Therefore, a trade off is required. 

\begin{remark}As explained in Remark \ref{thetadotadv}, the consensus of $\dot{\theta}_i~\forall i$ at zero results in alignment of all the interceptors along the fixed final angles $\gamma_{{M_iF}}=\theta_{iF}$ for all $t>t_f$ (that is, before the engagement is over), which also corresponds to the impact angle. This is also a feature of the cooperative scheme.
\end{remark}
If multiple non-cooperating missiles intercept a target individually within a fixed finite-time, say by extending the work in \cite{kumar2012sliding,kumar2014} to multiple missiles, the exact time instants when all the missiles intercept the target is not easy to compute. This is, however, not true of the cooperative approach since after $t_f$, the missiles move along straight lines towards the target. Finally, even if the consensus value of LOS rates is chosen to be non-zero, the collision avoidance feature of Algorithm \ref{algo2} is retained. Note that in the proof of Lemma \ref{colavoid}, the consensus value of $\dot{\theta}_i$ does not influence the subsequent steps. 

%

\begin{figure}[!h]
	\centering \subfigure[Trajectories of missiles]{\includegraphics[scale=.57]{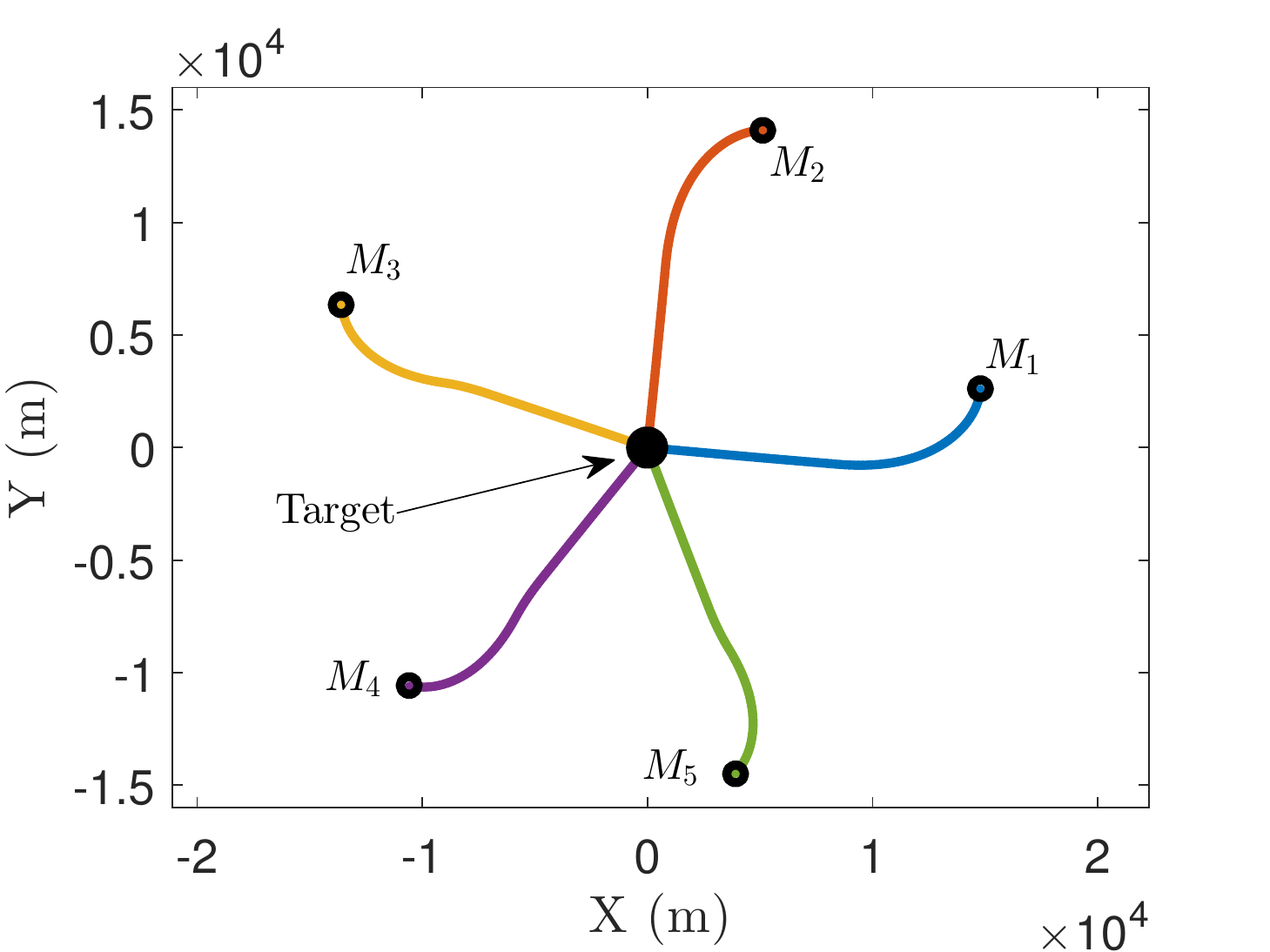}\label{trex1}}\hspace{-0.5cm}
	\subfigure[Switching surface deviations and lateral accelerations]{\includegraphics[scale=.45]{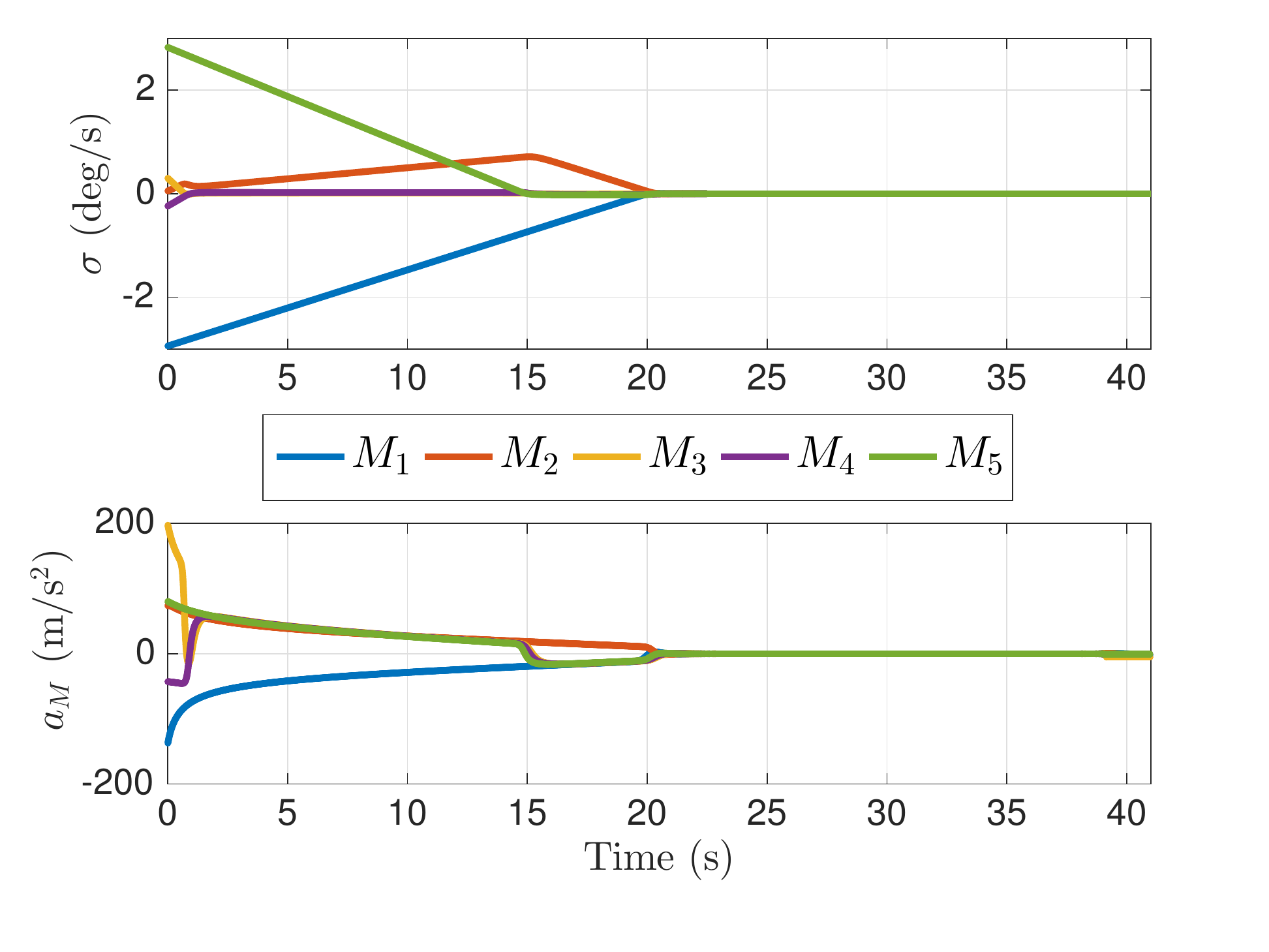}\label{samex1}}\hspace{-0.5cm}
	\subfigure[LOS angles and their rates]{\includegraphics[scale=.57]{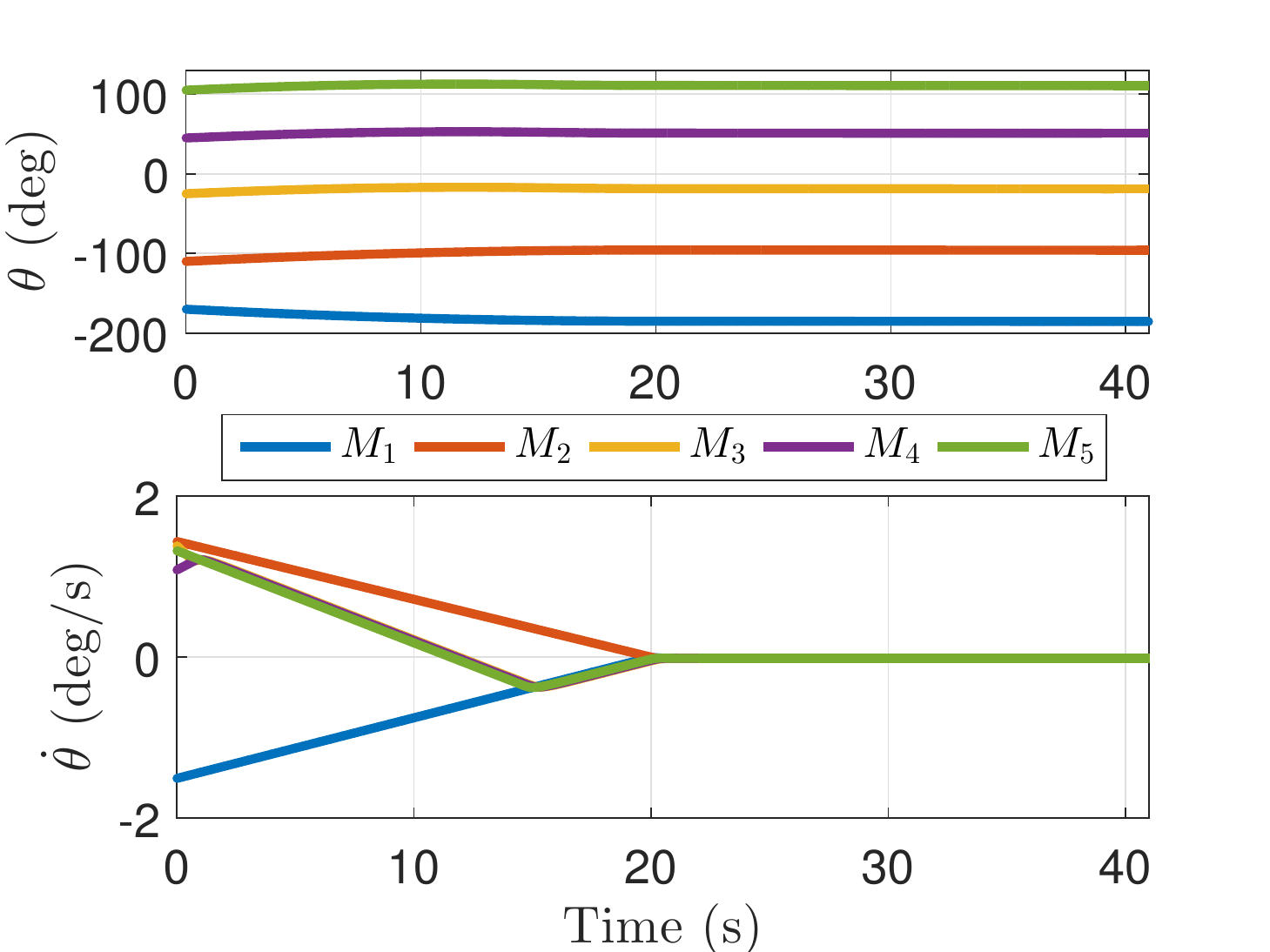}\label{thex1}}\hspace{-0.5cm}
	\caption{Cooperative guidance design using Algorithm~\ref{algo2}.}
	\label{figex1}
\end{figure}
\begin{figure}[!ht]
	\subfigure[Trajectories of missiles]{\includegraphics[scale=.57]{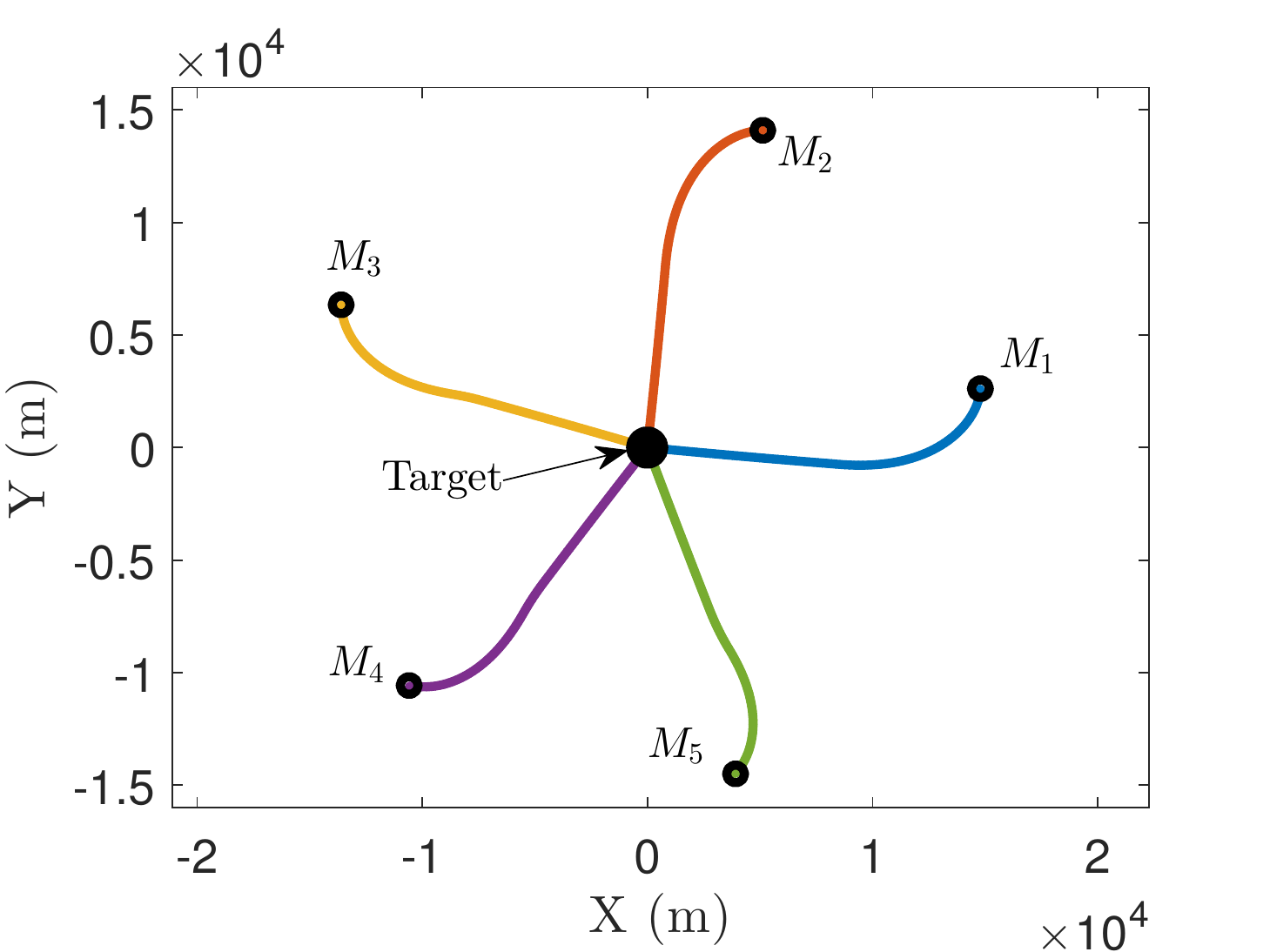}\label{trex2}}\hspace{-0.5cm}
	\subfigure[Switching surface deviations and lateral accelerations]{\includegraphics[scale=.57]{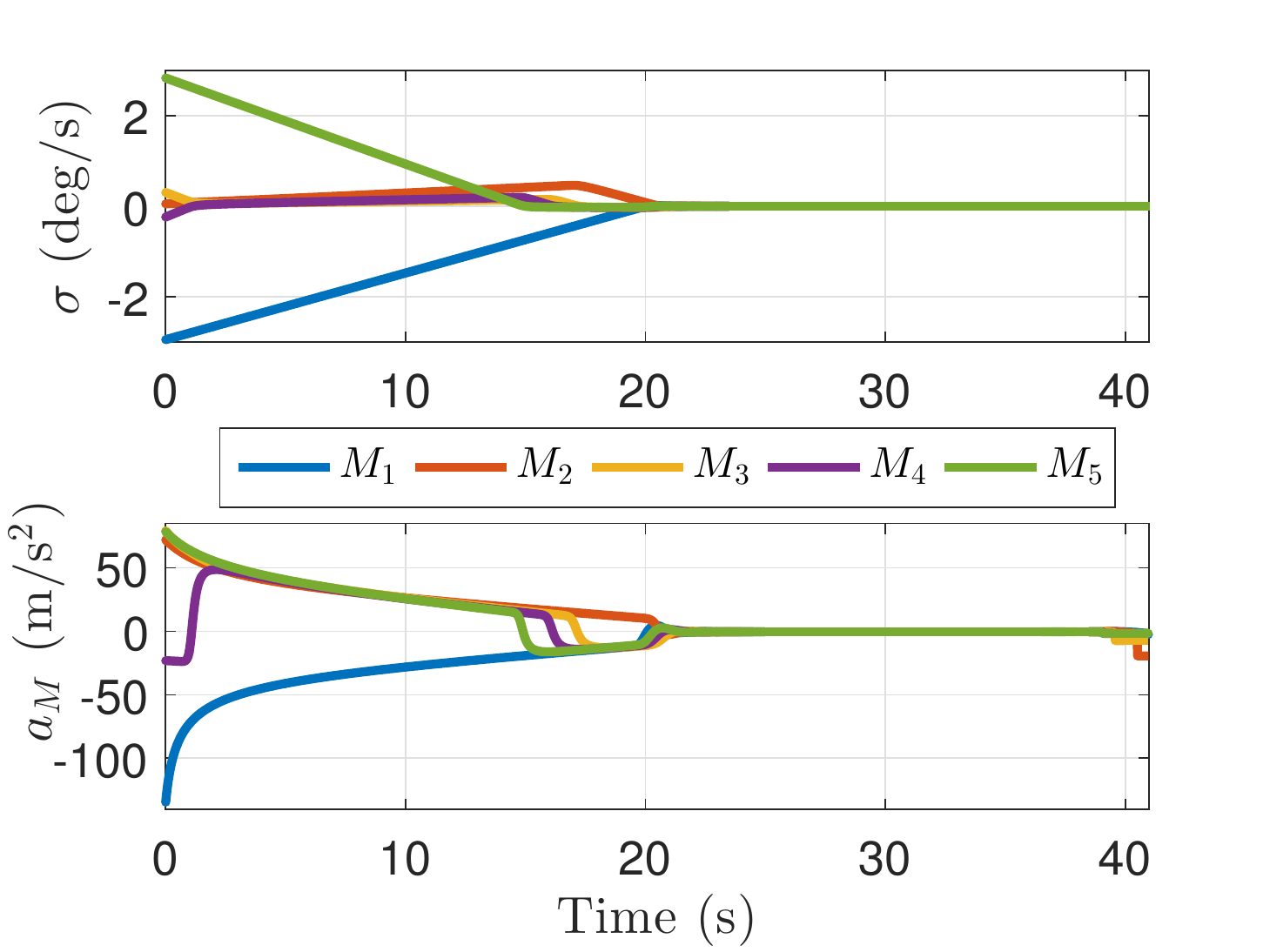}\label{samex2}}\hspace{-0.5cm}
	\subfigure[LOS angles and their rates]{\includegraphics[scale=.57]{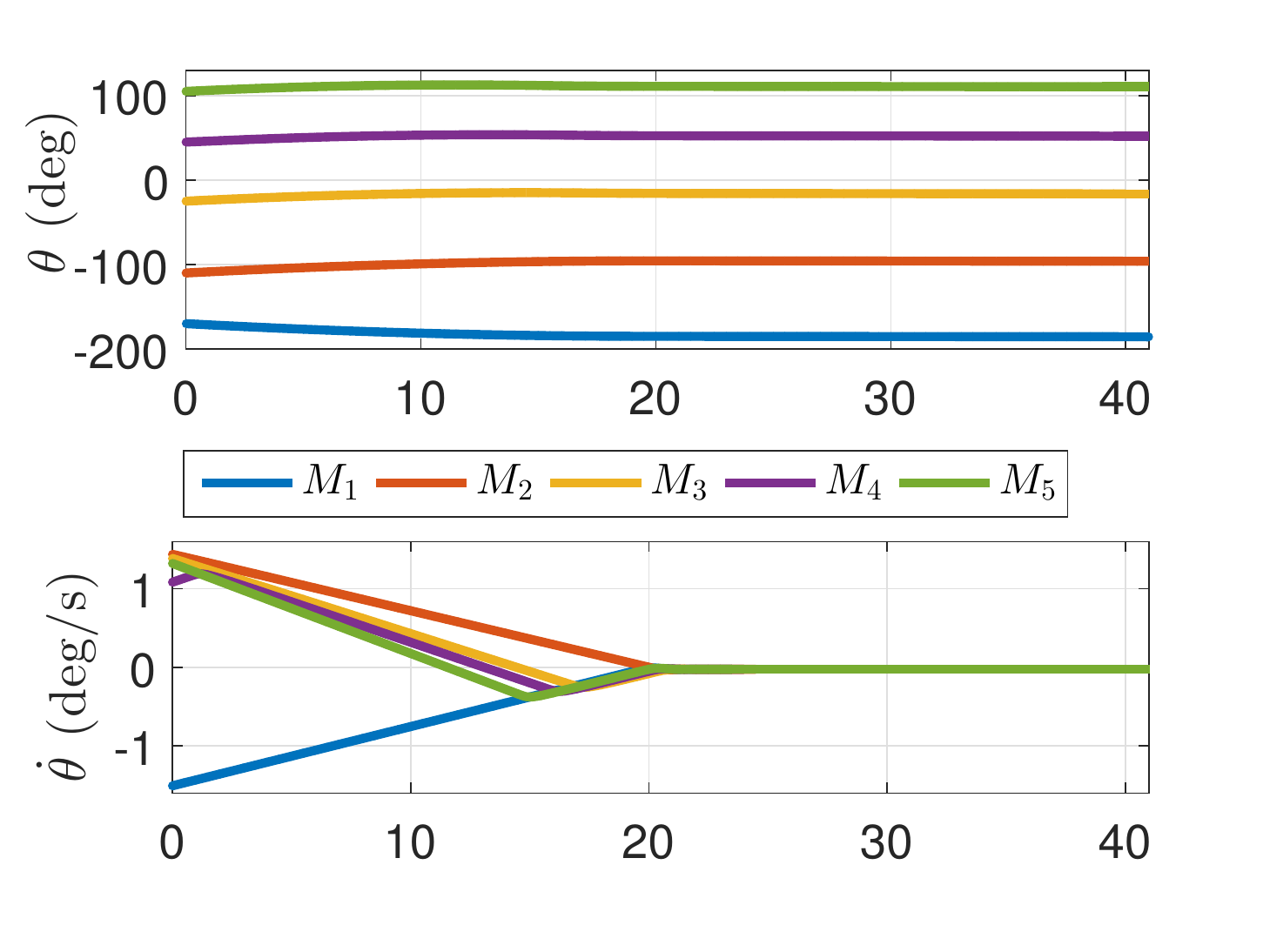}\label{thex2}}\hspace{-0.5cm}
	\caption{Cooperative guidance design without Algorithm~\ref{algo2}. }
	\label{figex2}
\end{figure}
\section{Simulations}\label{sims}
Simulations are carried out for both constant speed interceptors, and for interceptors whose speeds vary with time owing to aerodynamic variations. However, the gains of the interceptors are computed, using Algorithm \ref{algo2}, assuming constant speeds for the interceptors, in either case. Therefore, successful interception of targets for the latter case would validate the efficacy of the proposed cooperative guidance scheme.

\subsection*{Constant Speed Missile Models}
Two scenarios are considered here with five missiles, each of whose speed, $V_M$, is 400~m/s. The range of each missile in either example is 15~km and the values of initial LOS angles $\vect{\theta}(0) = [-170^\circ ~ -110^\circ~  -25^\circ~  45^\circ ~ 105^\circ]$. Since the earliest possible interception time is $r(0)/V_M = 37.5$~s, the consensus in LOS rates is desired at 20~s. Once consensus is achieved, impact direction will be same as the LOS angle which is consistent with discussions in Section~\ref{problem}. Due to Assumption~\ref{assmp4}, $0\in(\min_i\{\dot\theta_i(0)\},~\max_i\{\dot\theta_i(0)\})$. So Algorithm~\ref{algo2} guarantees a choice of gains, $\{w_i\}_{i=1\ldots n}$ that ensures consensus at zero LOS rate. In the first engagement scenario, the requisite gains are chosen by trial and error to ensure consensus at zero LOS rate, while Algorithm~\ref{algo2} is employed for the same in the second case.

\subsubsection*{Scenario~1}
The initial flight path angles of the five missiles are $[-90^\circ~-180^\circ~-90^\circ~0^\circ~45^\circ]$, which result in initial LOS rates given by [-2.6262~  2.5058~  2.4168 ~   1.8856 ~   2.3094]$\times 10^{-2}$ rad/s. With gains chosen as $\vect{w}$ = [1.315~ 1.255~ 4.900 ~2.700 ~2.000]$\times 10^{-3}$, using Algorithm~\ref{algo2}, consensus is achieved at 20 seconds. Simulation results for this case are shown in Fig. \ref{figex1}, which depicts the trajectories of the five missiles, switching surface and corresponding their lateral acceleration profiles, the evolution of their LOS angles, and their rates with time. From Fig \ref{trex1}, it follows that all the five missiles intercept the target while avoiding collision among themselves. Fig. \ref{samex1} shows that the missiles require higher lateral acceleration values before consensus is reached, but upon reaching consensus, the lateral acceleration requirement is zero. Note that if the gains are reduced, leading to a slower consensus, the maximum lateral acceleration required by each missile can be reduced. After consensus in LOS rates is reached, both the LOS rates and the LOS angles remain invariant, as illustrated in Fig. \ref{thex1}. Note that the invariance of LOS angles is a result of LOS rate consensus at zero. Further, no two missiles have the same LOS angles at any instant of time, thereby ruling out the possibility of collision.


\subsubsection*{Scenario~2}
The initial flight path angles are the same as in Scenario 1, leading to the same initial LOS rates. To illustrate the robustness of proposed algorithm with respect to variation in gains, the perturbed gains are chosen as $\vect{\bar{w}}=[1.315~ 1.255~ 1.685~ 1.800~ 2.000]\times10^{-3}$. Fig. \ref{thex2} shows that all the LOS rates, and the switching surfaces, $\sigma_i$, converge to zero at $t=20$ seconds, while Figs.~\ref{trex2} and \ref{samex2} show the trajectories of the five missiles, switching surface deviations and their lateral acceleration profiles, respectively. Observations similar to Scenario 1, including successful interception, are valid here as well. Note that the maximum value of lateral acceleration required by any agent is lower in this case than Scenario 1. This is because, here the gains corresponding to agents 3 and 4 have been reduced (these agents had the two highest gains in Scenario 1) and so the corresponding lateral acceleration values are also lower. Note that although the perturbed gains are not consistent with Algorithm \ref{algo2}, due to the insensitivity of the final consensus value and consensus time to variations in gain, there is no change in these values when two of the gains are perturbed. This further vindicates the claim in Remark \ref{robustness}. In general, the lateral acceleration requirements in both the scenarios can be further reduced if the desired consensus time, $t_f$, is increased by scaling down all the gains using the same factor.

\begin{figure*}[!ht]
	\centering\subfigure[Trajectories of missiles]{\includegraphics[scale=.6]{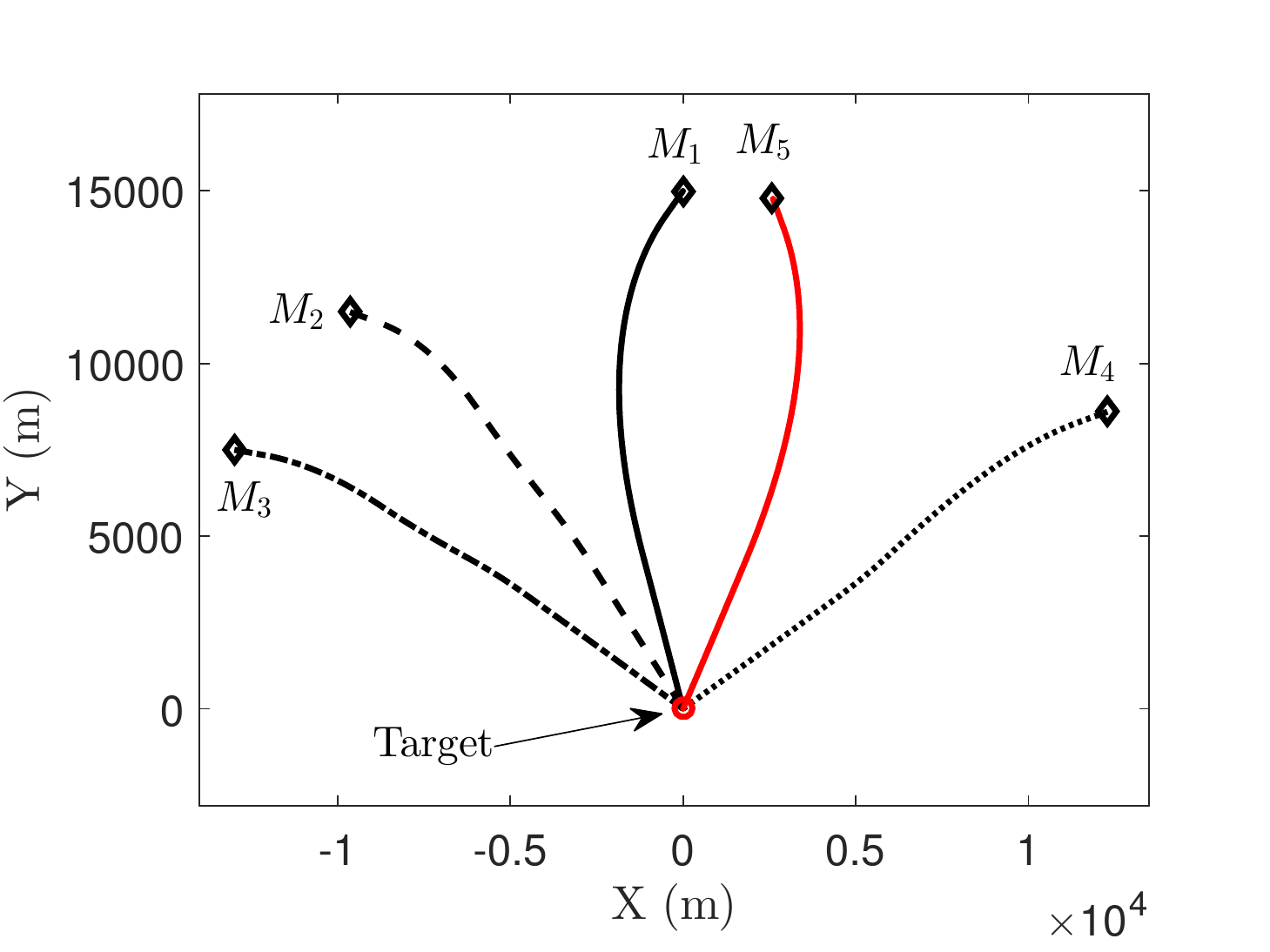}\label{trexreal1}}
	\subfigure[Switching surface deviations and lateral accelerations]{\includegraphics[scale=.45]{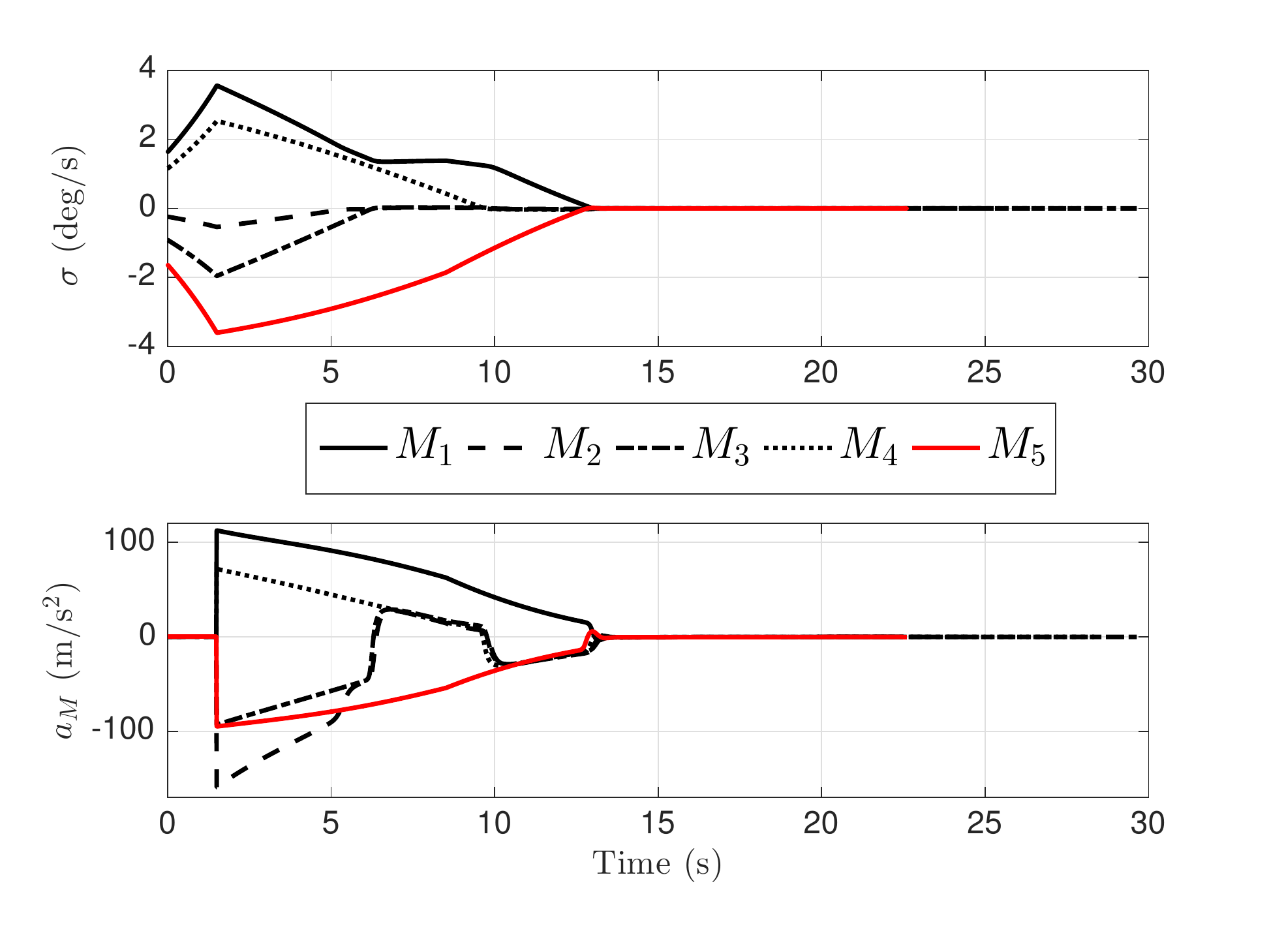}\label{samreal1}}
	\\
	\subfigure[LOS angles and their rates]{\includegraphics[scale=.6]{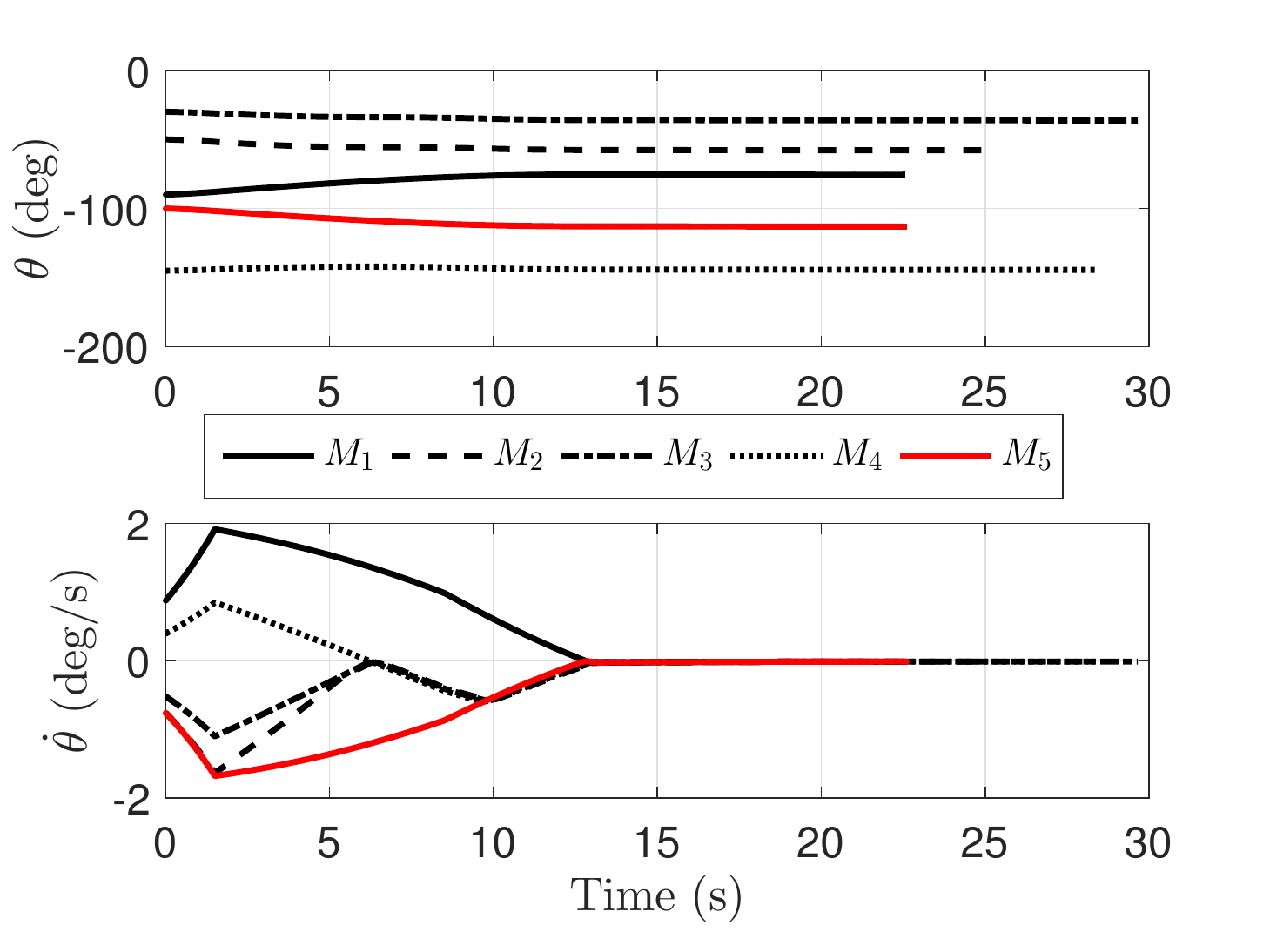}\label{threal1}}
	\subfigure[Missile speeds and drag profiles]{\includegraphics[scale=.6]{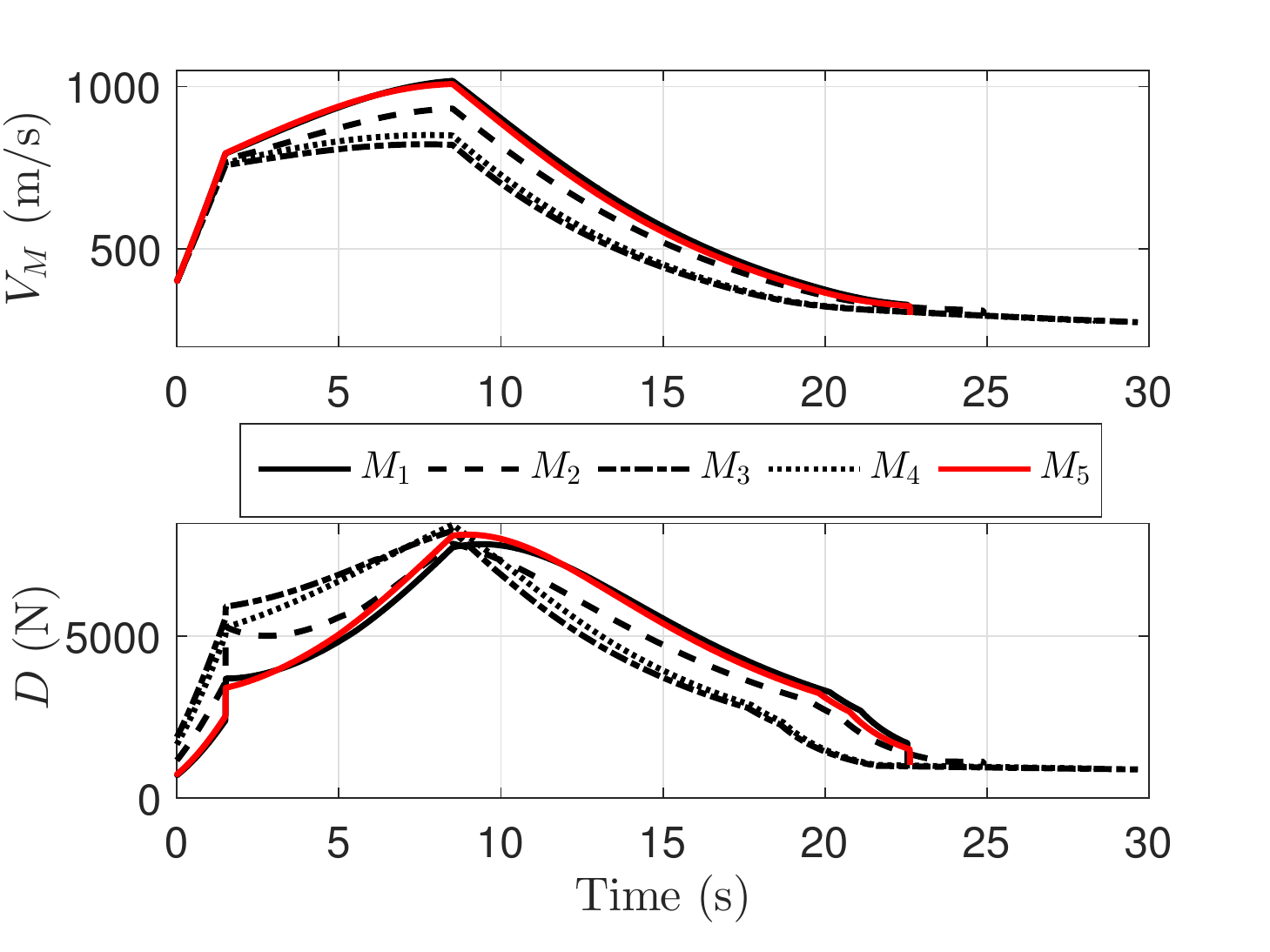}\label{vmdexreal1}}
	\caption{Cooperative guidance design with realistic missile models.}
	\label{figexreal}
\end{figure*}

\subsection*{Realistic Missile Models}

In this subsection, performance of the proposed guidance law is evaluated with a realistic interceptor model \cite{kee1998near}. It will be shown here that due to the inherent robustness of the SMC theory, the interceptor can achieve all objectives under a realistic scenario where the speed of the interceptor varies with time due to aerodynamic effects and time-limited thrust capability. Since the guidance law is designed for planar engagements, an interceptor model in the pitch plane is considered. The equations of motion of the $i^{\rm th}$ point-mass interceptor, flying over a flat and non-rotating Earth, are given by
\begin{equation}\label{vdotrl}
\hspace{-.1cm}\begin{aligned}
 {\dot X}_{M_i} =&~ {V_{M_i}}\cos \gamma _{M_i};~~ {\dot Y}_{M_i} = {V_{M_i}}\sin \gamma _{M_i},\\
 {\dot V}_{M_i} =&~ \frac{{T_i - D_i}}{m_i} - g\sin \gamma _{M_i};~~ {\dot \gamma }_{M_i} = \frac{a_{M_i} - g\cos  \gamma _{M_i} }{V_{M_i}},
\end{aligned}
\end{equation}
where $m_i, V_{M_i}$, and $\gamma_{M_i}$, are the mass, speed and flight path angle of the interceptor, respectively; $T_i$ and $D_i$ are thrust and drag on the interceptor, respectively; and $g$ is acceleration due to gravity. Also, $X_{M_i}$ and $Y_{M_i}$ denote the position of the interceptor in the Cartesian coordinate system and $a_{M_i}$ is the commanded interceptor lateral acceleration. The aerodynamic drag acting on the interceptor is modelled as
\begin{equation}\label{eq:drag}
\begin{aligned}
 D_i =&~ D_{0_i} + D_{I_i};~ {D_{0_i}} = C_{D_{0_i}}Q_i s_i,\\
 {D_{I_i}} =&~ \dfrac{{K_i{m_i^2}a_{M_i}^2}}{{Q_i s_i}},~K_i = \dfrac{1}{{\pi {A_{r_i}}e_i}};~~ Q_i = \dfrac{1}{2}\rho_i {V_{M_i}^2},
\end{aligned}
\end{equation}
where, $D_{0_i}$ and $D_{I_i}$, denote the zero-lift drag and induced drag, respectively; $C_{D_{0_i}}, K_i$, $A_{r_i}, e_i, \rho_i, s_i,$ and $Q_i$ are the zero-lift drag coefficient, induced drag coefficient, aspect ratio, efficiency factor, atmosphere density, reference area, and dynamic pressure, respectively. The detail of these coefficients and other parameters, and their variations can be found in \cite{kumar2014,kee1998near}.

In this simulation, the relative distance of all missile from the ground based target is 15 km. The LOS angles and initial flight path angles of missiles are given by $\vect{\theta}=[ -90^\circ~ -50^\circ~ -30^\circ~ -145^\circ~ -100^\circ ]$ and $\vect{\gamma}_M=[  -125^\circ~  -20^\circ~ -10 ^\circ~-160^\circ~ -70^\circ]$, respectively, which result in LOS rates, at the end of the boost phase, given by $[3.3208  ~ -2.8465  ~ -1.9160 ~   1.4691 ~  -2.9160]\times 10^{-2}$ rad/s. Note that the guidance loop is closed after the boost phase is over at $t=1.5$ seconds, and hence the values of LOS rates at 1.5 seconds serve as initial values for the design of gains. The gains are chosen using Algorithm~\ref{algo2} as $\vect{w} = [3.3208 ~6.8490~ 4.1900~ 3.3320~ 2.8778]\times 10^{-3}$ so as to achieve consensus after 10 seconds (11.5 seconds since start of the overall ebgagement). The initial speeds of all the missiles are equal to 400 m/s. Fig.~\ref{figexreal} shows the trajectories of all missiles, switching surface deviations and required lateral acceleration profiles, LOS angles and their rates, variations of missile's speeds, and the drag acting on them. The designed gains, which are supposed to lead to consensus at 11.5 seconds, result in consensus at around 12.5 seconds. This is on account of the variations in missile velocities, which were not considered during the design process of the gains. All the missiles are nevertheless able to achieve consensus in LOS rates at zero, as desired. This is shown in in Fig.~\ref{threal1}. Interception of the target occurs without mutual collision, while the acceleration demands are zero after consensus is achieved, as shown in Fig.~\ref{samreal1}. The missiles' speeds initially increase at a faster rate during the boost phase as shown in Fig.~\ref{vmdexreal1}, followed by relatively slower growth during sustain phase. During the terminal phase, the speeds decrease, due to presence of drag and lack of the thrust to compensate for the same, as shown in Fig.~\ref{vmdexreal1}. The drags are higher during sustain phase because of higher induced drag, owing to high missile speeds and lateral accelerations.


\section{Conclusions}\label{conc}

This paper presented a finite-time consensus algorithm for agents communicating over a cycle digraph. A detailed analysis was carried out for the case when the gains for the agents were allowed to be heterogeneous. Two algorithms were presented that aid in determining the point of convergence, where consensus occurs, and also the finite-time at which consensus occurs. 

This consensus scheme was then applied to a cooperative guidance scheme for multiple interceptors which ensures collision avoidance among the missiles. All the interceptors converged on their collision courses at the same time and remained there till interception. The applicability of the consensus scheme was tested on both realistic models of interceptors, as well as the constant speed models. Simulations established that the proposed strategy is effective in ensuring successful interception of the target while the missiles also avoid mutual collision. Future applications of this work may include ensuring specified relative spacing, and salvo attack for the engagement scenario, while theoretical studies of finite-time consensus laws maybe extended to consider higher order agent models over more general directed graphs.

\bibliographystyle{unsrt}
\bibliography{Bibtex_SRK_DM}

\begin{thebibliography}{10}

\bibitem{beard}
W.~Ren, R.~W. Beard, and E.~M. Atkins.
\newblock A survey of consensus problems in multiagent coordination.
\newblock In {\em Proceedings of the American Control Conference}, pages
  859--864. IEEE, 2005.

\bibitem{ren2008consensus}
W.~Ren.
\newblock Consensus tracking under directed interaction topologies: Algorithms
  and experiments.
\newblock In {\em American Control Conference, 2008}, pages 742--747. IEEE,
  2008.

\bibitem{ren2006consensus}
W.~Ren.
\newblock Consensus based formation control strategies for multi-vehicle
  systems.
\newblock In {\em American Control Conference, 2006}, pages 4237--4242. IEEE,
  2006.

\bibitem{mukherjee2016target}
D.~Mukherjee and D.~Ghose.
\newblock Target capturability using agents in cyclic pursuit.
\newblock {\em Journal of Guidance, Control, and Dynamics}, pages 1034--1045,
  2016.

\bibitem{klamkin}
M.~S. Klamkin and D.~J. Newman.
\newblock Cyclic pursuit or ``the three bugs problem".
\newblock {\em The American Mathematical Monthly}, 78(6):631--639, 1971.

\bibitem{bruck}
A.~M. Bruckstein, M.~Cohen, and A.~Efrat.
\newblock Ants, crickets and frogs in cyclic pursuit.
\newblock Cis report 9105, Technion- Israel Institute of Technology, Haifa,
  Israel, 1991.

\bibitem{rao2011sliding}
S.~Rao and D.~Ghose.
\newblock Sliding mode control-based algorithms for consensus in connected
  swarms.
\newblock {\em International Journal of Control}, 84(9):1477--1490, 2011.

\bibitem{AS}
A.~Sinha and D.~Ghose.
\newblock Generalization of linear cyclic pursuit with application to
  rendezvous of multiple autonomous agents.
\newblock {\em IEEE Transactions on Automatic Control}, 51(11):1819---1824,
  2006.

\bibitem{mukherjee2016generalized}
D.~Mukherjee and D.~Ghose.
\newblock Generalized hierarchical cyclic pursuit.
\newblock {\em Automatica}, 71:318--323, 2016.

\bibitem{181}
D.~Mukherjee and S.~R. Kumar.
\newblock Consensus based cooperative guidance strategy for collision-free
  interception.
\newblock In {\em Proceedings of the American Control Conference}. IEEE, 2019.
\newblock (under review).

\bibitem{mukherjee2018robustness}
D.~Mukherjee and D.~Zelazo.
\newblock Robustness of consensus over weighted digraphs.
\newblock {\em IEEE Transactions on Network Science and Engineering}, page
  (Early Access), 2018.

\bibitem{DZ1}
D.~{Zelazo} and M.~{B{\"u}rger}.
\newblock On the robustness of uncertain consensus networks.
\newblock {\em IEEE Transactions on Control of Network Systems}, 4(2):170--178.

\bibitem{wieland2008consensus}
P.~Wieland, J.~S. Kim, H.~Scheu, and F.~Allg{\"o}wer.
\newblock On consensus in multi-agent systems with linear high-order agents.
\newblock {\em IFAC Proceedings Volumes}, 41(2):1541--1546, 2008.

\bibitem{tahk06}
I-S. Jeon, J-I. Lee, and M-J. Tahk.
\newblock Impact-time-control guidance law for anti-ship missiles.
\newblock {\em IEEE Transactions on control systems technology},
  14(2):260--266, 2006.

\bibitem{tal}
T.~Shima.
\newblock Intercept-angle guidance.
\newblock {\em Journal of Guidance, Control, and Dynamics}, 34(2):484--492,
  2011.

\bibitem{kumar2012sliding}
S.~R. Kumar, S.~Rao, and D.~Ghose.
\newblock Sliding-mode guidance and control for all-aspect interceptors with
  terminal angle constraints.
\newblock {\em Journal of Guidance, Control, and Dynamics}, 35(4):1230--1246,
  2012.

\bibitem{kumar2014}
S.~R. Kumar, S.~Rao, and D.~Ghose.
\newblock Nonsingular terminal sliding mode guidance with impact angle
  constraints.
\newblock {\em Journal of Guidance, Control, and Dynamics}, 37(4):1114--1130,
  2014.

\bibitem{rao}
S.~Rao and D.~Ghose.
\newblock Terminal impact angle constrained guidance laws using variable
  structure systems theory.
\newblock {\em IEEE Transactions on Control Systems Technology},
  21(6):2350--2359, 2013.

\bibitem{vitaly}
V.~Shaferman and T.~Shima.
\newblock Cooperative optimal guidance laws for imposing a relative intercept
  angle.
\newblock {\em Journal of Guidance, Control, and Dynamics}, 38(8):1395--1408,
  2015.

\bibitem{VadimUtkin2009}
V.~Utkin, J.~Guldner, and J.~Shi.
\newblock {\em Sliding Mode Control in Electro-Mechanical Systems}.
\newblock CRC Press, Automation and Control Engineering Series, New York, 2
  edition, 2009.

\bibitem{hungsmc}
J.~Y. Hung, W.~Gao, and J.~C. Hung.
\newblock Variable structure control: A survey.
\newblock {\em IEEE Transactions on Industrial Electronics}, 40(1):2--22, 1993.

\bibitem{utkinsurvey}
K.~D. Young, V.~I. Utkin, and U.~Ozguner.
\newblock A control engineer's guide to sliding mode control.
\newblock In {\em Variable Structure Systems, 1996. VSS'96. Proceedings., 1996
  IEEE International Workshop on}, pages 1--14. IEEE, 1996.

\bibitem{mukherjee2016synchronous}
D.~Mukherjee and D.~Ghose.
\newblock On synchronous and asynchronous discrete time heterogeneous cyclic
  pursuit.
\newblock {\em IEEE Transactions on Automatic Control}, 62(10):5248--5253,
  2017.

\bibitem{shevitz1994lyapunov}
D.~Shevitz and B.~Paden.
\newblock Lyapunov stability theory of nonsmooth systems.
\newblock {\em IEEE Transactions on Automatic Control}, 39(9):1910--1914, 1994.

\bibitem{kee1998near}
P.~Kee, L.~Dong, and C.~Siong.
\newblock Near optimal midcourse guidance law for flight vehicle.
\newblock In {\em 36th AIAA Aerospace Sciences Meeting and Exhibit}, page 583.

\end{thebibliography}

\end{document}